\definecolor{mygreen}{RGB}{80,180,0} 
\definecolor{b2}{RGB}{51,153,255}
\definecolor{myred}{RGB}{220,10,10}
\definecolor{mycy2}{RGB}{255,51,255}
\newtheorem{theorem}{Theorem}[section]
\newtheorem{lemma}[theorem]{Lemma}
\newtheorem{definition}[theorem]{Definition}
\newtheorem{corollary}[theorem]{Corollary}
\newtheorem{fact}[theorem]{Fact}
\newtheorem{remark}[theorem]{Remark}
\newtheorem*{theorem*}{Theorem}
\begin{document}
\global\long\def\defeq{\stackrel{\mathrm{{\scriptscriptstyle def}}}{=}}
\global\long\def\norm#1{\left\Vert #1\right\Vert }
\global\long\def\R{\mathbb{R}}
 \global\long\def\Rn{\mathbb{R}^{n}}
\global\long\def\tr{\mathrm{Tr}}
\global\long\def\diag{\mathrm{diag}}
\global\long\def\cov{\mathrm{Cov}}
\global\long\def\E{\mathbb{E}}
\global\long\def\P{\mathbb{P}}
\global\long\def\Var{\mathrm{Var}}
\global\long\def\rank{\mathrm{rank}}
\global\long\def\lref#1{\text{Lem }\ltexref{#1}}
\global\long\def\lreff#1#2{\text{Lem }\ltexref{#1}.\ltexref{#1#2}}
\global\long\def\ltexref#1{\ref{lem:#1}}\global\long\def\ttag#1{\tag{#1}}
\global\long\def\cirt#1{\raisebox{.5pt}{\textcircled{\raisebox{-.9pt}{#1}}}}

\newcommand{\dd}{d}
\newcommand{\op}{\ensuremath{\mathrm{op}}\xspace}
\newcommand{\p}{\ensuremath{\mathbb{P}}\xspace}
\newcommand{\Gauss}{\ensuremath{\mathcal{N}}\xspace}
\newcommand{\spe}{\ensuremath{\mathrm{op}}\xspace}
\newcommand{\Def}{\ensuremath{\mathrm{def}}\xspace}
\newcommand{\dtv}{\ensuremath{d_\mathrm{TV}}\xspace}
\newcommand{\eps}{\epsilon}
\newcommand{\Ind}{\ensuremath{\mathbf{1}}}
\newcommand{\poly}{\mathsf{poly}}
\newcommand{\interior}{\mathsf{int}}
\newcommand{\vol}{\mathsf{vol}}
\newcommand{\wt}[1]{\widetilde{#1}}
\newcommand{\spn}{\mathsf{span}}
\newcommand{\SO}{\mathsf{SO}}
\newcommand{\EO}{\mathsf{EO}}
\renewcommand{\gcd}{\mathsf{gcd}}
\newcommand{\lcm}{\mathsf{lcm}}
\newcommand{\CVP}{\mathsf{CVP}}
\newcommand{\SVP}{\mathsf{SVP}}
\newcommand{\SC}{\mathsf{SC}}
\newcommand{\Cov}{\mathsf{Cov}}
\newcommand{\cg}{\mathsf{cg}}
\newcommand{\supp}{\mathsf{supp}}
\newcommand{\new}{\mathsf{new}}
\newcommand{\out}{\mathsf{out}}
\newcommand{\width}{\mathsf{width}}
\newcommand{\free}{\mathsf{free}}

\newcommand{\eat}[1]{}

\newcommand{\haotian}[1]{{\color{blue} \textbf{Haotian:} #1}}
\newcommand{\yintat}[1]{{\color{red} \textbf{Yin-Tat:} #1}}


\title{Minimizing Convex Functions with Rational Minimizers\thanks{To appear in the Journal of the ACM. This journal version simplifies and significantly strengthens the results in an earlier version of this paper which appeared in SODA 2021.}}

\author{Haotian Jiang \thanks{Paul G. Allen School of CSE, University of Washington, USA. \texttt{jhtdavid@cs.washington.edu}. Supported by NSF grants CCF-1749609, DMS-1839116 and DMS-2023166.}}

\date{}

\maketitle

\begin{abstract}
Given a separation oracle $\mathsf{SO}$ for a convex function $f$ defined on $\mathbb{R}^n$ that has an integral minimizer inside a box with radius $R$, we show how to find an exact minimizer of $f$ using at most 
\begin{itemize}
\item $O(n (n \log \log (n)/\log (n) + \log(R)))$ calls to $\mathsf{SO}$ and $\poly(n, \log(R))$ arithmetic operations, or 
\item $O(n \log(nR))$ calls to $\mathsf{SO}$ and $\exp(O(n)) \cdot \poly(\log(R))$ arithmetic operations. 
\end{itemize}

When the set of minimizers of $f$ has integral extreme points, our algorithm outputs an integral minimizer of $f$. This improves upon the previously best oracle complexity of $O(n^2 (n + \log(R)))$ for polynomial time algorithms and $O(n^2\log(nR))$ for exponential time algorithms obtained by [Gr\"otschel, Lov\'asz and Schrijver, Prog. Comb. Opt. 1984, Springer 1988] over thirty years ago. Our improvement on Gr\"otschel, Lov\'asz and Schrijver's result generalizes to the setting where the set of minimizers of $f$ is a rational polyhedron with bounded vertex complexity.

For the Submodular Function Minimization problem, our result immediately implies a strongly polynomial algorithm that makes at most $O(n^3 \log \log (n)/\log (n))$ calls to an evaluation oracle, and an exponential time algorithm that makes at most $O(n^2 \log(n))$ calls to an evaluation oracle. These improve upon the previously best $O(n^3 \log^2(n))$ oracle complexity for strongly polynomial algorithms given in [Lee, Sidford and Wong, FOCS 2015] and [Dadush, V\'egh and Zambelli, SODA 2018], and an exponential time algorithm with oracle complexity $O(n^3 \log(n))$ given in the former work. 

Our result is achieved via a reduction to the Shortest Vector Problem in lattices. We show how an approximately shortest vector of certain lattice can be used to effectively reduce the dimension of the problem. 
Our analysis of the oracle complexity is based on a potential function that captures simultaneously the size of the search set and the density of the lattice, which we analyze via tools from convex geometry and lattice theory.   
\end{abstract}

\newpage

\section{Introduction}

In this paper, we investigate the problem of minimizing a convex function $f$ on $\mathbb{R}^n$ accessed through a separation oracle $\SO$ \cite{gls81}. When queried with a point $x$, the oracle returns ``YES'' if $x$ minimizes $f$; otherwise, the oracle returns a hyperplane that separates $x$ from the minimizer of $f$.   
An algorithm is said to be {\em strongly polynomial}~\cite{gls88} for such a problem if it makes $\poly(n)$ calls to $\SO$, uses $\poly(n)$ arithmetic operations, and the size of numbers occurring during the algorithm is polynomially bounded by $n$ and the size of the output of the separation oracle.

Designing strongly polynomial algorithms for continuous optimization problems with certain underlying combinatorial structure is a well-studied but challenging task in general. 
To this date, despite tremendous effort, it remains a major open question to solve linear programming (LP) in strongly polynomial time.
This problem is also widely known as Smale's 9th question~\cite{smale98}. 
Despite this barrier, such algorithms are known under additional assumptions: linear systems with at most two non-zero entries per row \cite{m83,ac91,cm94} or per column \cite{v17,ov20} in the constraint matrix, LPs with bounded entries in the constraint matrix \cite{t86,vy96,dhnv20}, and LPs with $0$-$1$ optimal solutions \cite{c12,c15}. 

For minimizing a general convex function $f$, strongly polynomial algorithms are hopeless unless $f$ satisfies certain combinatorial properties. 
In this work, we study the setting where the minimizer of $f$ is an integral point inside a box with radius\footnote{
It's easy to show that strongly polynomial algorithm doesn't exist if $\log(R)$ is super-polynomial (see Remark~\ref{remark:lower_bound}).} 
$R = 2^{\poly(n)}$. 
The integrality assumption on the minimizer is natural, and is general enough to encapsulate well-known problems such as submodular function minimization, where $R = 1$.
Prior to our work, an elegant application of simultaneous Diophantine approximation due to Gr\"otschel, Lov\'asz and Schrijver~\cite{gls84,gls88} gives\footnote{The original approach by Gr\"otschel, Lov\'asz and Schrijver was given in the context of obtaining exact solutions to LP, but it is immediately applicable to our problem. Their approach was briefly described in~\cite{gls84} with details given in~\cite{gls88}. Their approach originally used the ellipsoid method which is sub-optimal in terms of oracle complexity. The oracle complexity given here uses Vaidya's cutting plane method~\cite{v89}.} a strongly polynomial algorithm that minimizes $f$ using $O(n^2(n + \log(R)))$ calls to the separation oracle and an exponential time algorithm that finds the minimizer of $f$ using $O(n^2 \log(nR))$ oracle calls.

In fact, Gr\"otschel, Lov\'asz and Schrijver's approach applies to the more general setting of rational polyhedra, which they use to derive polynomial time algorithms for a wide range of combinatorial optimization problems \cite{gls81,gls88}. 
In the rational polyhedra setting, the set of minimizers of $f$ is a polyhedron $K^*$ inside a box with radius $R$, and the vertices of $K^*$ are all rational vectors with LCM vertex complexity\footnote{Here we use a slightly different definition from Gr\"otschel, Lov\'asz and Schrijver's original definition of vertex complexity in \cite{gls81,gls88} so that $\varphi = 0$ corresponds to the setting of integral minimizers. More details can be found in \Cref{subsubsec:rational_polyhedra}.} bounded by at most $\varphi \geq 0$ (\Cref{defn:vertex_complexity}). 
In particular, the case of integral minimizers in the previous paragraph corresponds to when $\varphi = 0$. 
For the more general setting of rational polyhedra, Gr\"otschel, Lov\'asz and Schrijver's approach implies a polynomial time algorithm that finds a vertex of $K^*$ using $O(n^2(n + \varphi + \log(R)))$ separation oracle calls, and an exponential time algorithm that uses $O(n^2(\varphi + \log(nR)))$ oracle calls. 
We refer interested readers to~\cite[Chapter 6]{gls88} for a detailed presentation of their approach.
The purpose of the present paper is to design a new method to improve the number of separation oracle calls.

A closely related problem, known as the Convex Integer Minimization problem, asks to minimize a convex function $f$ over the set of integer points.
Dadush~\cite[Section 7.5]{d12thesis} gave an algorithm for this problem that takes $n^{O(n)}$ time and exponential space. 
In fact, the Convex Integer Minimization problem generalizes integer linear programming and thus cannot be solved in sub-exponential time under standard complexity assumptions, so the integrality/rationality assumption on the minimizer of $f$ is, in some sense, necessary for obtaining efficient algorithms.

The number of separation oracle calls made by an algorithm for minimizing a convex function $f$, known as the {\em oracle complexity}, plays a central role in black-box models of convex optimization. 
For weakly polynomial algorithms, it's well-known that $\Theta(n \log(nR/\epsilon))$ oracle calls is optimal, with $\epsilon$ being the accuracy parameter.
The first exponential time algorithm that achieves the optimal oracle complexity is the famous center of gravity method discovered independently by Levin~\cite{l65} and Newman~\cite{n65}. 
As for polynomial time algorithms, an oracle complexity of this order was first achieved over thirty years ago by the method of inscribed ellipsoids~\cite{kte88,nn89}. 
In contrast, the optimal oracle complexity for strongly polynomial algorithms is largely unknown to this date.   
This motivates the present paper to place a focus on the oracle complexity aspect of our algorithms.

\subsection{Our results}
\label{subsec:result}

To formally state our result, we first define the notion of a separation oracle as formulated in \cite{gls81}. 

\begin{definition}[Separation oracle \cite{gls81}] \label{def:separation_oracle}
Let $f$ be a convex function on $\mathbb{R}^n$ and $K^*$ be the set of minimizers of $f$. 
Then a (strong) separation oracle $\SO$ for $f$ is one that:
\begin{itemize}
    \item [(a)] when queried with a minimizer $x \in K^*$, it outputs ``YES'';
    \item [(b)] when queried with a point $x \notin K^*$, it outputs a non-zero vector $c \in \mathbb{R}^n$ such that $\min_{y \in K^*} c^\top y > c^\top x$. 
\end{itemize}
\end{definition}


\medskip
\noindent\textbf{The setting of integral minimizers.} 
The main result of the paper in this setting is the following reduction to the Shortest Vector Problem (see Section~\ref{subsec:shortest_vector_lll}) given in \Cref{thm:main}. 
The seemingly strong assumption $(\star)$ guarantees that our algorithm finds an {\em integral} minimizer of $f$, which is crucial for our application to submodular function minimization.  
To find an arbitrary minimizer of $f$, we only need the much weaker assumption that $f$ has an integral minimizer (see \Cref{remark:any_minimizer}).

\begin{restatable}[Main result for integral minimizers]{theorem}{thmmain}  \label{thm:main}
Given a separation oracle $\SO$ for a convex function $f$ defined on $\mathbb{R}^n$, and a $\gamma$-approximation algorithm $\textsc{ApproxSVP}$ for the shortest vector problem which takes $T_{\textsc{SVP}}$ arithmetic operations.  
If the set of minimizers $K^*$ of $f$ is contained in a box of radius $R$ and satisfies
\begin{itemize}
    \item [($\star$)] all extreme points of $K^*$ are integral,
\end{itemize} 
then there is a randomized algorithm that with high probability finds an integral minimizer of $f$ using $O(n \log(\gamma n R))$ calls to $\SO$ and $\poly(n, \log(\gamma R)) \cdot T_{\textsc{SVP}}$ arithmetic operations.
\end{restatable}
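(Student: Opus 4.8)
The plan is to reduce the dimension of the problem one unit at a time by repeatedly ``peeling off'' a lattice hyperplane that must contain all of $K^*$, and to measure progress by a single potential that simultaneously tracks the volume of the current search set and the covolume of the current lattice. Throughout I maintain a chain of affine subspaces $\mathbb{R}^n = V_0 \supseteq V_1 \supseteq \cdots$ with $\dim V_i = n-i$, the lattice $\Lambda_i$ of integral points of $V_i$ (a coset of a rank-$(n-i)$ lattice, stored via an integral basis computed by Hermite normal form), and a convex search set $P_i \subseteq V_i$, with the invariants $K^* \subseteq P_i$ and $K^* \subseteq V_i$. The first invariant is preserved because every hyperplane returned by $\SO$ is valid for all of $K^*$ by Definition~\ref{def:separation_oracle}(b); the second is the content of the dimension-reduction step. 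Note that assumption $(\star)$, together with $K^*$ being closed, convex and bounded, forces $K^*$ to be a polytope whose (finitely many) vertices are all integral; in particular $K^*$ contains an integral point.

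\emph{One phase.} Inside $V_i$ I run a cutting-plane method whose search set has $(n-i)$-volume decreasing by a constant factor every $O(1)$ oracle calls — Vaidya's volumetric-center method in the polynomial-arithmetic regime, the center-of-gravity method in the exponential-arithmetic regime — keeping the search set sandwiched between two homothetic ellipsoids of ratio $\poly(n-i)$ (its John ellipsoid $E$). After every oracle call I hand the polar data of $E$ and the dual lattice $\Lambda_i^*$ to $\textsc{ApproxShortestVector}$: since the width of $E$ in a direction $v$ equals $2\sqrt{v^\top M v}$ for the shape matrix $M$ of $E$, minimizing the width over $v \in \Lambda_i^* \setminus \{0\}$ is exactly a $\gamma$-approximate shortest-vector instance. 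Let $v$ be the returned vector, made primitive (which only shortens it). If $\mathrm{width}_{P_i}(v) < 1$ I perform a dimension reduction; otherwise I continue cutting. Minkowski's first theorem applied to the polar ellipsoid $E^{\circ}$, whose volume is $\omega_d^2/\vol(E)$ ($\omega_d$ the volume of the unit Euclidean ball in $\mathbb{R}^d$, $d=n-i$), combined with the John containment $E \subseteq P_i \subseteq \poly(d)\cdot E$, shows that once $\vol_d(P_i)/\det(\Lambda_i)$ drops below $\omega_d^2\,(c_0 d\gamma)^{-d}$ for a suitable constant $c_0$ the returned $v$ necessarily satisfies $\mathrm{width}_{P_i}(v)<1$; since $\vol_d(P_i)$ shrinks geometrically, each phase ends after a bounded number of oracle calls.

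\emph{Dimension reduction and correctness.} When a direction $v$ with $w := \mathrm{width}_{P_i}(v) < 1$ is found, $\{\langle v, x\rangle : x \in P_i\}$ is an interval of length $w<1$; since $\langle v, \cdot\rangle$ is integer-valued on $\Lambda_i$ and $P_i \supseteq K^*$ contains a point of $\Lambda_i$, this interval contains exactly one integer $k$. Every vertex of $K^*$ is integral, hence lies on $H := \{x \in V_i : \langle v,x\rangle = k\}$, so $K^* = \mathrm{conv}(\text{vertices}) \subseteq H$. I set $V_{i+1} = H$, $\Lambda_{i+1} = \Lambda_i \cap H$, $P_{i+1} = P_i \cap H$, preserving both invariants. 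The process stops when $d=0$: then $V_i$ is a single integral point which, containing the non-empty $K^*$, equals $K^*$, and I output it. (If $\SO$ ever answers ``YES'' at a queried center, that center lies in $K^*$; this degenerate case can simply be ignored, since all progress comes from the dimension reductions.)

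\emph{Oracle complexity via a potential, and the main obstacle.} Define, after $j$ reductions have occurred with current search set $P$ in dimension $d$, the potential
\[
\Phi \;:=\; \log \vol_d(P) \;-\; \log \det(\Lambda) \;-\; \sum_{i<j} \log(1/w_i),
\]
where $w_i<1$ is the width at the $i$-th reduction. Every oracle call decreases $\Phi$ by $\Omega(1)$ (amortized), since $\det(\Lambda)$ is fixed within a phase and $\vol_d(P)$ shrinks geometrically. A dimension reduction changes $\Phi$ by at most $+\log d$: indeed $\log\det(\Lambda\cap H) - \log\det(\Lambda) = \log\|v\|_2$ for primitive $v$, while a slicing bound (compare $P$ with the two cones joining the slice $P\cap H$ to the extreme points of $P$ in directions $\pm v$) gives $\vol_{d-1}(P\cap H) \le (d\|v\|_2/w)\,\vol_d(P)$, so the two $\log\|v\|_2$ terms and the two $\log(1/w)$ terms cancel, leaving $+\log d$. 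Summing over the run, the number of oracle calls is $O\big(\Phi_{\mathrm{init}} - \Phi_{\mathrm{final}} + \sum_j \log d_j\big)$, where $\Phi_{\mathrm{init}} = \log\vol_n(\text{initial box}) = O(n\log(nR))$, $\sum_j \log d_j = O(n\log n)$, and $-\Phi_{\mathrm{final}} = \sum_{j=1}^n \log(1/w_j)$. Everything thus reduces to the one genuinely delicate point: \emph{bounding each triggering width $w_j$ from below}. This is where testing for a thin direction after \emph{every single} oracle call pays off — at the step just before the $j$-th reduction the returned vector had $\SO$-width $\ge 1$, so the lattice width of the search set was $\Omega(1/(\poly(d)\gamma))$ there; and since one central cut distorts the John ellipsoid of the search set by only a $\poly(d)$ factor, the lattice width, and hence $w_j$, is still $\Omega(1/(\poly(d)\gamma))$. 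Thus $\sum_j \log(1/w_j) = O(n\log(n\gamma))$, giving a total of $O(n\log(nR)) + O(n\log(n\gamma)) + O(n\log n) = O(n\log(\gamma nR))$ oracle calls, one $\textsc{ApproxShortestVector}$ call per oracle call, and $\poly(n,\log R)$ additional arithmetic per step (cutting-plane update plus lattice bookkeeping, all numbers rounded to bit-size $\poly(n,\log R)$). I expect the bulk of the effort to lie in (i) the convex-geometry estimates — the John-ellipsoid Minkowski threshold and the slicing bound — and (ii) the $\poly(d)$ bound on the per-step distortion of the search set that underlies the lower bound on $w_j$ and hence the entire telescoping.
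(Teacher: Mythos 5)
Your overall architecture closely parallels the paper's: cutting-plane steps interleaved with lattice-width-based dimension reductions, a thin-direction test after every oracle call, and a potential combining $\log \vol(P)$ with $\log \det(\Lambda)$. (Your formulation via the intersection lattice $\Lambda_i$ and widths along its dual is equivalent, through $(\Pi_{W_0}(\mathbb{Z}^n))^* = \mathbb{Z}^n \cap W_0$, to the paper's search for a short vector of the \emph{projected} lattice in the $\Cov(K)$-norm, and your elementary cone bound $\vol_{d-1}(P\cap H)\le (d\norm{v}_2/w)\,\vol_d(P)$ is a clean substitute for the paper's slicing estimate when a single reduction is concerned.) However, there is a genuine gap at exactly the point you flag as delicate: the lower bound on the triggering widths $w_j$. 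Your argument --- ``at the step just before the $j$-th reduction the returned vector had width $\ge 1$, and one central cut distorts the ellipsoid by only $\poly(d)$'' --- applies only when the $j$-th reduction is immediately preceded by a cutting-plane step. Nothing prevents \emph{consecutive} dimension reductions: after slicing, the body $P\cap H$ may already be extremely thin in a direction of the new dual lattice (for instance when $H$ passes far from the centroid and the slice is a tiny sliver near a face of $P$), so the very next test can trigger with $w_j$ arbitrarily small --- bounded only by bit-size considerations, i.e. $2^{-\poly(n,\log R)}$, not by $1/(\gamma\,\poly(n))$. Since your potential defers every $\log(1/w_j)$ to $-\Phi_{\mathrm{final}}$, a single such reduction already ruins the claimed $O(n\log(\gamma n R))$ count; the exact cancellation in your per-reduction accounting does not help, because the $w_j$ terms reappear verbatim in the final tally.

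This is precisely the issue the paper isolates (``it's not clear why the triggering vector cannot be too short after a sequence of dimension reduction steps'') and resolves by a different mechanism: rather than a per-reduction width bound, it analyzes each \emph{maximal block} of consecutive reductions at once using the high dimensional slicing lemma (Lemma~\ref{lem:high_dim_slicing}), which compares $\vol(K\cap W)/\det(L\cap W)$ with $\vol(K)/\det(L)$ via isotropic log-concavity (Theorem~\ref{thm:high_dim_logconcave}) and Minkowski's theorem, and needs a lower bound on $\lambda_1$ only at the \emph{start} of the block --- where a cutting-plane step did just occur --- supplied by the covariance-stability estimate (Lemma~\ref{lem:covariance_change}). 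To repair your proof you would need either such a block-wise estimate, or a modified potential in which a small $w_j$ is offset by the accompanying collapse in the volume of the slice rather than deferred to the final potential; as written, the bound $\sum_j \log(1/w_j) = O(n\log(n\gamma))$ is unsupported.
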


In particular, taking $\textsc{ApproxSVP}$ to be the polynomial time $2^{n \log \log (n)/\log (n)}$-approximation algorithm in \cite{aks01} (which improves upon the celebrated 
LLL algorithm \cite{lll82} and Schnorr's block reduction algorithm \cite{s87}), or the exponential time algorithms for exact SVP \cite{aks01,mv13,adrs15} give the following corollary.

\begin{corollary}[Instantiations of main result] \label{cor:instantiation}
Under the same assumptions as in \Cref{thm:main}, there is a randomized algorithm that with high probability finds an integral minimizer of $f$ using
\begin{itemize}
\item[(a)] $O(n (n \log\log (n)/\log (n) + \log(R)) )$ calls to $\SO$ and $\poly(n, \log(R))$ arithmetic operations, or 
\item[(b)] $O(n \log(nR))$ calls to $\SO$ and $\exp(O(n)) \cdot \poly(\log(R))$ arithmetic operations. 
\end{itemize}
\end{corollary}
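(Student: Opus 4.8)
The plan is to feed the two standard lattice algorithms into Theorem~\ref{thm:main} and track how the approximation factor $\gamma$ and the per-call cost $\mathcal{T}_{\textsf{ApproxSV}}$ propagate into the two stated bounds. In both cases the oracle complexity $O(n\log(\gamma nR))$ and arithmetic cost $\poly(n,\log R) + O(n\log(\gamma nR)) \cdot \mathcal{T}_{\textsf{ApproxSV}}$ of Theorem~\ref{thm:main} are simplified by plugging in the relevant values of $\gamma$ and $\mathcal{T}_{\textsf{ApproxSV}}$; the only thing to keep in mind is that $\mathcal{T}_{\textsf{ApproxSV}}$ depends on the bit size of the lattice bases handed to the subroutine, so one must first note that every such basis arising during the run of the algorithm of Theorem~\ref{thm:main} has entries of bit size $\poly(n,\log R)$ (and that clearing denominators to present it with an integer basis changes the bit size by at most a $\poly(n,\log R)$ factor).

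For part~(a) I would take $\textsc{ApproxShortestVector}$ to be the LLL algorithm of Theorem~\ref{thm:lll}~\cite{lll82}, which on a rank-$n$ lattice achieves approximation factor $\gamma = 2^{O(n)}$ in a number of arithmetic operations polynomial in $n$ and the bit size of the input basis, hence $\mathcal{T}_{\textsf{ApproxSV}} = \poly(n,\log R)$. Substituting $\gamma = 2^{O(n)}$ gives $\log(\gamma nR) = O(n) + \log n + \log R = O(n + \log R)$, so the oracle complexity becomes $O\big(n(n+\log R)\big)$ and the arithmetic cost collapses to $\poly(n,\log R) + O\big(n(n+\log R)\big)\cdot \poly(n,\log R) = \poly(n,\log R)$.

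For part~(b) I would instead take $\textsc{ApproxShortestVector}$ to be an exact SVP solver, such as the Micciancio--Voulgaris algorithm or the discrete Gaussian sampling algorithm of~\cite{mv13,adrs15}, which achieves $\gamma = 1$ in $2^{O(n)}\cdot\poly(b)$ arithmetic operations on a rank-$n$ lattice with basis of bit size $b$; with $b = \poly(n,\log R)$ this is $2^{O(n)}\cdot\poly(\log R)$ after absorbing the $\poly(n)$ factors into $2^{O(n)}$. Since $\gamma = 1$, we get $\log(\gamma nR) = \log(nR)$, so the oracle complexity is $O(n\log(nR))$ and the arithmetic cost is $\poly(n,\log R) + O(n\log(nR))\cdot 2^{O(n)}\poly(\log R) = \exp(O(n))\cdot\poly(\log R)$. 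The only step requiring any care beyond substitution is the bit-size bound on the intermediate lattices mentioned above, which is needed to pin down $\mathcal{T}_{\textsf{ApproxSV}}$ concretely; everything else is arithmetic simplification of the bounds in Theorem~\ref{thm:main}.
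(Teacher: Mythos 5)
Your proposal is correct and is exactly the paper's argument: the paper treats this corollary as an immediate instantiation of Theorem~\ref{thm:main}, plugging in the LLL algorithm (Theorem~\ref{thm:lll}, $\gamma = 2^{O(n)}$, polynomial cost) for part~(a) and an exact SVP solver~\cite{mv13,adrs15} ($\gamma = 1$, $2^{O(n)}\cdot\poly$ cost) for part~(b), with the same arithmetic simplifications you carry out. Your remark about bounding the bit size of the intermediate lattice bases by $\poly(n,\log R)$ is a sensible extra precaution that the paper leaves implicit.
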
 

More generally, for any integer $r > 1$, one can use the $r^{O(n/r)}$-approximation algorithm in $2^{O(r)} \poly(n)$ time for SVP given in \cite{aks01, mv13} to obtain a smooth tradeoff between time and oracle complexity in \Cref{thm:main}, but we omit the explicit statements of these results.

\begin{remark}[Assumption $(\star)$ and lower bound] \label{remark:lower_bound}
Without assumption $(\star)$, we give a $2^{\Omega(n)}$ information theoretic lower bound on the number of $\SO$ calls needed to find an integral minimizer of $f$. 
Consider the unit cube $K=[0,1]^n$ and let $V(K) = \{0,1\}^n$ be the set of vertices.  
For each $v \in V(K)$, define the simplex $\Delta(v) = \{x \in K: \norm{x - v}_1 < 0.01 \}$. 
Randomly pick a vertex $u \in V(K)$ and consider the convex function 
\begin{align*}
f_u(x) = \begin{cases}
0 \qquad & x \in K \setminus ( \cup_{v \in V(K) \setminus \{u\}} \Delta(v) )\\
\infty \qquad & \text{otherwise}
\end{cases}.
\end{align*}
When queried with a point $x \in \Delta(v)$ for some $v \in V(K) \setminus \{u\}$, we let $\SO$ output a separating hyperplane $H$ such that $K \cap H \subseteq \Delta(v)$; when queried with $x \notin K$, we let $\SO$ output a hyperplane that separates $x$ from $K$. 
Notice that $u$ is the unique integral minimizer of $f_u$, and to find $u$, one cannot do better than randomly checking vertices in $V(K)$ which takes $2^{\Omega(n)}$ queries to $\SO$. 

We next argue that $\Omega(n \log(R))$ calls to $\SO$ is information theoretically necessary in Theorem~\ref{thm:main}. 
Consider $f$ with a unique integral minimizer which is a random integral point in $B_\infty(R) \cap \mathbb{Z}^n$, where $B_\infty(R)$ is the $\ell_\infty$ ball with radius $R$. 
In this case, one cannot hope to do better than just bisecting the search space for each call to $\SO$ and this strategy takes $\Omega(n \log(R))$ calls to $\SO$ to reduce the size of the search space to a constant.  
\end{remark}

\begin{remark}[A weaker assumption] \label{remark:any_minimizer}
As shown in the previous remark, it is impossible in general to find an integral minimizer of $f$ efficiently without assumption $(\star)$. 
However, one can still find a  minimizer (which is not necessarily integral) of $f$under the much weaker assumption that $f$ has an integral minimizer, i.e. $K^* \cap \mathbb{Z}^n \neq \emptyset$. In this case, one can use the same algorithm as in \Cref{thm:main} until $\SO$ first returns ``YES'' and simply output the query point. The guarantees in \Cref{thm:main} also applies to this case. 
\end{remark}

\medskip
\noindent \textbf{Generalization to the rational polyhedra setting.} 
\Cref{thm:main} generalizes to the setting of rational polyhedra, where the set of minimizers $K^*$ of $f$ is a polyhedron contained in a box of radius $R$, and all vertices of $K^*$ are rational vectors with LCM vertex complexity at most $\varphi \geq 0$. 
Roughly speaking, this means that the least common multiple of the denominators in the fractional representation of each vertex is upper bounded by $2^\varphi$. 
We postpone the precise definitions of LCM vertex complexity and rational polyhedra to \Cref{subsubsec:rational_polyhedra} (\Cref{defn:vertex_complexity} and \ref{defn:rational_polyhedra}). 
The proof of the following theorem (which also implies \Cref{thm:main}) will be given in \Cref{sec:implementation}.

\begin{restatable}[Main result for rational polyhedra]{theorem}{thmpolyhedra}  \label{thm:polyhedra}
Given a separation oracle $\SO$ for a convex function $f$ defined on $\mathbb{R}^n$, and a $\gamma$-approximation algorithm $\textsc{ApproxSVP}$ for the shortest vector problem which takes $T_{\textsc{SVP}}$ arithmetic operations.  
If the set of minimizers $K^*$ of $f$ is a rational polyhedron contained in a box of radius $R$ and has LCM vertex complexity at most $\varphi \geq 0$, then there is a randomized algorithm that with high probability finds a vertex of $K^*$ using $O(n(\varphi + \log(\gamma n R)))$ calls to $\SO$ and $\poly(n, \varphi, \log(\gamma R)) \cdot T_{\textsc{SVP}}$ arithmetic operations.
\end{restatable}

\subsection{Application to Submodular Function Minimization}
\label{subsec:appliaction_submodular}

Submodular function minimization (SFM) has been recognized as an important problem in the field of combinatorial optimization. 
Classical examples of submodular functions include graph cut functions, set coverage function, and utility functions from economics. 
Since the seminal work by Edmonds in 1970~\cite{e70}, SFM has served as a popular tool in various fields such as theoretical computer science, operations research, game theory, and machine learning. 
For a more comprehensive account of the rich history of SFM, we refer interested readers to the excellent surveys~\cite{m05,i08}.

\begin{table}[htp!]
    \centering
    \begin{tabular}{ | l | l | l | l | l | }
        \hline
        {\bf Authors} & {\bf Year} & {\bf Oracle Complexity} & {\bf Remarks} \\ \hline \hline
        Gr\"otschel, Lov\'asz, Schrijver~\cite{gls81,gls88}    & 1981,88    & $\widetilde{O}(n^5)$~\cite{m05} & first strongly    \\ \hline
        Schrijver~\cite{s00} & 2000  & $O(n^8)$ & first comb. strongly \\ \hline
        Iwata, Fleischer, Fujishige~\cite{iff01}         & 2000    & $ O(n^7 \log(n))$  & first comb. strongly \\ \hline
        Fleischer, Iwata~\cite{fi03}  & 2000    & $O(n^7)$   &  \\ \hline
        Iwata~\cite{i03} & 2002 & $O(n^6 \log(n))$  &   \\ \hline
        Vygen~\cite{v03}   & 2003    & $ O(n^7)$ &  \\ \hline
        Orlin~\cite{o09}   & 2007    & $O(n^5)$  &  \\ \hline
        Iwata, Orlin~\cite{io09} & 2009 & $ O(n^5 \log(n)) $ & \\ \hline
        Lee, Sidford, Wong~\cite{lsw15} & 2015 & $O(n^3 \log^2(n))$ & current best strongly \\ \hline
        Lee, Sidford, Wong~\cite{lsw15} & 2015 & $O(n^3 \log(n))$ & exponential time \\ \hline
        Dadush, V{\'e}gh, Zambelli~\cite{dvz18} & 2018 & $O(n^3 \log^2(n))$ & current best strongly \\ \hline
        {\bf This paper} & 2020 & $O(n^3 \log\log (n)/\log (n))$ &  \\ \hline 
        {\bf This paper} & 2020 & $O(n^2 \log(n))$ & exponential time \\ \hline
    \end{tabular}
    \caption{Strongly polynomial algorithms for submodular function minimization. The oracle complexity measures the number of calls to the evaluation oracle $\EO$. In the case where a paper is published in both conference and journal, the year we provide is the earliest one.}
    \label{tab:submodular}
\end{table}

The formulation of SFM we consider is the standard one: we are given a submodular function $f$ defined over subsets of an $n$-element ground set. 
The values of $f$ are integers, and are evaluated by querying an evaluation oracle that takes time $\EO$. 
Since the breakthrough work by Gr\"otschel, Lov\'asz, Schrijver~\cite{gls81,gls88} that the ellipsoid method can be used to construct a strongly polynomial algorithm for SFM, there has been a vast literature on obtaining better strongly polynomial algorithms (see Table~\ref{tab:submodular}). 
These include the very first combinatorial strongly polynomial algorithms constructed by Iwata, Fleischer and Fujishige~\cite{iff01} and Schrijver~\cite{s00}.
Very recently, a major improvement was made by Lee, Sidford and Wong~\cite{lsw15} using an improved cutting plane method. 
Their algorithm achieves the state-of-the-art oracle complexity of $O(n^3 \log^2(n))$ for strongly polynomial algorithms. 
A simplified variant of this algorithm achieving the same oracle complexity was given in~\cite{dvz18}.

The authors of~\cite{lsw15} also noted that $O(n^3 \log(n))$ oracle calls are information theoretically sufficient for SFM (\cite[Theorem 71]{lsw15}), but were unable to give an efficient algorithm achieving such an oracle complexity. 
They asked as open problems (\cite[Section 16.1]{lsw15}):
\begin{itemize}
\item [(a)] whether there is a strongly polynomial algorithm achieving the $O(n^3 \log(n))$ oracle complexity; 
\item [(b)] whether one could further (even information theoretically) remove the extraneous $\log(n)$ factor from the oracle complexity. 
\end{itemize}
The significance of these questions stem from their belief that $\Theta(n^3)$ is the tight oracle complexity for strongly polynomial algorithms for SFM (see \cite[Section 16.1]{lsw15} for a more detailed discussion).

We answer both these open questions affirmatively 
in the following Theorem~\ref{thm:submodular_main}, which follows from applying \Cref{cor:instantiation} to the Lov\'asz extension $\hat{f}$ of the function $f$, together with the standard fact that a separation oracle for $\hat{f}$ can be implemented using $n$ calls to the evaluation oracle (\cite[Theorem 61]{lsw15}).
We provide details on these definitions and the proof of \Cref{thm:submodular_main} in \Cref{sec:submodular_appendix}.

\begin{restatable}[Submodular function minimization]{theorem}{Submodular} \label{thm:submodular_main}
Given an evaluation oracle $\EO$ for a submodular function $f$ defined over subsets of an $n$-element ground set, there exist 
\begin{itemize}
\item[(a)] a strongly polynomial algorithm that minimizes $f$ using $O(n^3 \log\log (n)/\log (n))$ calls to $\EO$, and 
\item[(b)] an exponential time algorithm that minimizes $f$ using $O(n^2 \log(n))$ calls to $\EO$. 
\end{itemize}
\end{restatable}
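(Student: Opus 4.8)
The plan is to apply Corollary~\ref{cor:instantiation} to the Lov\'asz extension of $f$, with box radius $R = 1$, and then read off a minimizing set from the integral minimizer it returns. Normalizing $f(\emptyset) = 0$, I would work with the function $\hat f$ defined on the cube $[0,1]^n$ by the sorted-coordinate formula
$\hat f(x) = \sum_{k=1}^{n} x_{\sigma(k)}\big(f(S_k) - f(S_{k-1})\big)$,
where $x_{\sigma(1)} \ge \cdots \ge x_{\sigma(n)}$, $S_k \defeq \{\sigma(1),\dots,\sigma(k)\}$ and $S_0 \defeq \emptyset$, extended by $+\infty$ outside $[0,1]^n$ (equivalently, the separation oracle will just return a coordinate cut for points outside the cube). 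I would then invoke three classical facts (see e.g.~\cite{gls88,m05,i08}): (i) $\hat f$ is convex precisely because $f$ is submodular; (ii) $\min \hat f = \min_{S \subseteq [n]} f(S)$, and every nonempty level set $\{i : x^\star_i \ge \theta\}$ of a minimizer $x^\star$ of $\hat f$ is a minimizing set of $f$; and (iii) the set $K^*$ of minimizers of $\hat f$ is a polytope contained in $[0,1]^n$ all of whose extreme points are indicator vectors $\Ind_S$ of minimizing sets $S$ of $f$, because the family of minimizing sets of a submodular function is closed under union and intersection. Thus $K^*$ lies in a box of radius $R=1$ and satisfies assumption $(\star)$, and an integral minimizer of $\hat f$ is exactly $\Ind_{S^\star}$ for some minimizing set $S^\star$ of $f$.

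Next I would supply a separation oracle for $\hat f$ as in Definition~\ref{def:separation_oracle}. On a query $x \notin [0,1]^n$, return the obvious coordinate hyperplane separating $x$ from $[0,1]^n \supseteq K^*$. On a query $x \in [0,1]^n$, run Edmonds' greedy algorithm: sort the coordinates and evaluate $f$ along the chain $\emptyset = S_0 \subset S_1 \subset \cdots \subset S_n$ --- that is $O(n)$ calls to $\EO$ --- to obtain the greedy subgradient $g$ with $g_{\sigma(k)} = f(S_k) - f(S_{k-1})$. Since $\hat f(y) \ge \hat f(x) + \langle g, y - x\rangle$ for all $y$ while $\hat f(y) \le \hat f(x)$ for every $y \in K^*$, the vector $c = -g$ satisfies $c^\top y \ge c^\top x$ on $K^*$, and the same greedy data certifies optimality of $x$ when $x \in K^*$ (so the oracle can answer ``YES''). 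This is exactly the oracle of~\cite[Theorem 61]{lsw15}, costing $O(n) \cdot \EO$ plus $\poly(n)$ arithmetic per query.

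Then I would feed this oracle into Corollary~\ref{cor:instantiation} with $R = 1$ and return $S^\star = \supp(x^\star)$ from the integral minimizer $x^\star$ it produces. Branch (a) makes $O(n(n+\log R)) = O(n^2)$ calls to $\SO$, each costing $O(n)$ calls to $\EO$, hence $O(n^3)$ evaluation-oracle calls together with $\poly(n)$ arithmetic operations on numbers of bit-length polynomial in $n$ and in the bit-length of the values of $f$ --- a strongly polynomial algorithm, which is part (a). Branch (b) makes $O(n\log(nR)) = O(n\log n)$ calls to $\SO$, hence $O(n^2\log n)$ calls to $\EO$, in $\exp(O(n))$ arithmetic operations, which is part (b).

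I expect the only step that is not pure bookkeeping to be the verification of fact (iii) --- that the minimizer polytope $K^*$ of the Lov\'asz extension has $0/1$ extreme points, so that assumption $(\star)$ applies with $R=1$ --- which, although classical, is the crux, and reduces to the closure of the minimizing sets of a submodular function under union and intersection. Everything else is the oracle-complexity multiplication $O(n) \times (\#\,\SO\text{ calls})$ and a routine check that branch (a) introduces no large numbers, so that it remains strongly polynomial.
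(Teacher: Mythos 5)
Your proposal is correct and follows essentially the same route as the paper: apply Corollary~\ref{cor:instantiation} with $R=1$ to the Lov\'asz extension, using the standard facts that its minimizer set is the convex hull of indicator vectors of minimizing sets (so assumption $(\star)$ holds) and that a separation oracle costs $O(n)$ evaluation-oracle calls (Theorem~\ref{thm:SO_from_EO}), then multiply the oracle complexities. The only difference is that you spell out the greedy subgradient and the closure of minimizers under union/intersection, which the paper delegates to Theorem~\ref{thm:lovasz_extension_properties} and the cited literature.
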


To the best of our knowledge, the results in \Cref{thm:submodular_main} represent the first algorithms that achieve $o(n^3)$ oracle complexity for SFM, even information theoretically. 
The first result in \Cref{thm:submodular_main} breaks the natural $O(n^3)$ barrier for the oracle complexity of strongly polynomial algorithms.  
The second result pushes the information theoretic oracle complexity for exact SFM down to nearly quadratic.

Our algorithm is conceptually simpler than the algorithms given in~\cite{lsw15,dvz18}. 
Moreover, while most of the previous strongly polynomial algorithms for SFM vastly exploit different combinatorial structures of submodularity, our result is achieved via a very general algorithm and uses the structural properties of submodular functions in a minimal way.

\subsection{Proof Overview}
\label{subsec:proof_overview}

Without loss of generality, we may assume that $f$ has a unique minimizer $x^*$ in \Cref{thm:main} and \ref{thm:polyhedra}. 
To justify this statement, suppose the set of minimizers $K^*$ of $f$ satisfies assumption $(\star)$.
Let $x^* \in K^*$ be the unique lexicographically minimal minimizer, i.e. every other minimizer $x \in K^*$ satisfies $x_i > x_i^*$ for the smallest coordinate $i \in [n]$ in which $x_i \neq x_i^*$. 
Whenever $\SO$ is queried at a minimizer $y \in K^*$ and outputs ``YES'', our algorithm continues to minimize the linear objective $e_i^\top x$, where $i \in [n]$ is the smallest index such that the $i$th standard orthonormal basis vector $e_i$ is not orthogonal to the current working subspace, by pretending that $\SO$ returns\footnote{Note that this implementation of the separation oracle for the lexicographically minimal minimizer $x^*$ does not quite satisfy the conditions in \Cref{def:separation_oracle}. In particular, even when $x^*$ is queried, the separation oracle for finding $x^*$ might not realize it unless the current working subspace is trivial (i.e. $0$-dimensional). However, all our results and proofs still hold under this slightly weaker implementation of the separation oracle.} the vector $-e_i$ (until its search set contains a single point). 
Equivalently, our algorithm minimizes the linear objectives $e_1^\top x, \cdots, e_n^\top x$ in the given order inside $K^*$, and this optimization problem has the unique solution $x^*$. 
We make the assumption that $f$ has a unique minimizer $x^*$ in the rest of this paper. 


For simplicity, we further assume in the subsequent discussions that $x^* \in \{0,1\}^n$, i.e. $R = 1$ in the setting of integral minimizer, which does not change the problem inherently.

On a high level, our algorithm maintains a convex search set $K$ that contains the integral minimizer $x^*$ of $f$, and iteratively shrinks $K$ using the cutting plane method; as the volume of $K$ becomes small enough, our algorithm finds a hyperplane $P$ that contains all the integral points in $K$ and recurse on the lower-dimensional search set $K \cap P$.
The assumption that $x^*$ is integral guarantees that $x^* \in K \cap P$.
This natural idea was previously used in~\cite{gls84,gls88} to handle rational polytopes that are not full-dimensional and in~\cite{lsw15} to argue that $O(n^3 \log(n))$ oracle calls is information theoretically sufficient for SFM. 
The main technical difficulties in efficiently implementing such an idea are two-fold: 
\begin{itemize}
    \item [(a)] we need to find the hyperplane $P$ that contains $K \cap \mathbb{Z}^n$;
    \item [(b)] we need to carefully control the amount $\vol(K)$ is shrunk so that progress is not lost.   
\end{itemize}
The second difficulty is key to achieving a small oracle complexity and deserves some further explanation. 
To see why shrinking $K$ arbitrarily might result in a loss of progress, it's instructive to consider the following toy example: suppose an algorithm starts with the unit cube $K = [0,1]^n$ and $x^*$ lies on the hyperplane $K_1 = \{x: x_1 = 0\}$; 
suppose the algorithm obtains, in its $i$th call to $\SO$, the halfspace $H_i = \{x: x_1 \leq 2^{-i}\}$.  
After $T$ calls to $\SO$, the algorithm obtains the refined search set $K \cap H_T$ with volume $2^{-T}$. 
However, when the algorithm reduces the dimension and recurses on the hyperplane $K_1$, the $(n-1)$-dimensional volume of the search set again becomes $1$, and
the progress made by the algorithm in shrinking the volume of $K$ is entirely lost. 
In contrast, the correct algorithm can reduce the dimension after only one call to $\SO$ when it's already clear that $x^* \in K_1$.

\subsubsection{The Gr\"otschel-Lov\'asz-Schrijver Approach}

For the moment, let's take $K$ to be an ellipsoid. 
Such an ellipsoid can be obtained by Vaidya's volumetric center cutting plane method\footnote{Perhaps a more natural candidate is the ellipsoid method developed in~\cite{yn76,s77,k80}. This method, however, shrinks the volume of $K$ by a factor of $O(n)$ slower than Vaidya's method. In fact, the Gr\"otschel-Lov\'asz-Schrijver approach~\cite{gls84} originally used the ellipsoid method which results in an oracle complexity of $O(n^4)$ for their polynomial time algorithm.}~\cite{v89}. 
One natural idea to find the hyperplane comes from the following geometric intuition: when the ellipsoid 
$K$ is ``flat'' enough in one direction, then all of its integral points lie on a hyperplane $P$. 
To find such a hyperplane $P$, Gr\"otschel, Lov\'asz and Schrijver~\cite{gls84,gls88} gave an elegant application of simultaneous Diophantine approximation.  
We explain the main ideas behind this application in the following. 
We refer interested readers to~\cite[Chapter 6]{gls88} for a more comprehensive presentation of their approach and its implications to finding exact LP solutions. 

For simplicity, we assume $K$ is centered at $0$.  
Let $a$ be the unit vector parallel to the shortest axis of $K$ and $\mu_{\min}$ be the Euclidean length of the shortest axis of $K$. 
Approximating the vector $a$ using the efficient simultaneous Diophantine approximation algorithm by Lenstra, Lenstra and Lov\'asz~\cite{lll82}, 
one obtains an integral vector $v \in \mathbb{Z}^n$ and a positive integer $q \in \mathbb{Z}$ such that
\begin{align*}
\norm{q a - v}_\infty < 1/3n \qquad \text{and} \qquad  0 < q < 2^{2n^2} .
\end{align*}
This implies that for any integral point $x \in K \cap \{0,1\}^n$, 
\begin{align*}
|v^\top x| \leq |qa^\top x| + \frac{1}{3n} \cdot \norm{x}_1 \leq q \cdot \mu_{\min} + 1/3 .
\end{align*}

When $\mu_{\min} < 2^{-3n^2}$, the integral inner product $v^\top x$ has to be $0$ and therefore all integral points in $K$ lie on the hyperplane $P = \{x: v^\top x = 0\}$. 
An efficient algorithm immediately follows: we first run the cutting plane method until the shortest axis of $K$ has length $\mu_{\min} \approx 2^{-3n^2}$, then apply the above procedure to find the hyperplane $P$ on which we recurse. 

To analyze the oracle complexity of this algorithm, one naturally uses $\vol(K)$ as the potential function. 
An amortized analysis using such a volume potential previously appeared, for example, in~\cite{dvz20} for finding maximum support solutions in the linear conic feasibility problem. 
Roughly speaking, each cutting plane step (corresponding to one oracle call) decreases $\vol(K)$ by a constant factor; each dimension reduction step increases $\vol(K)$ by roughly $1/\mu_{\min} \approx 2^{3n^2}$. 
As there are $n$ dimension reduction steps before the problem becomes trivial, the total number of oracle calls is thus $O(n^3)$.  
The exponential time oracle complexity bound of $O(n^2 \log (n))$ can be obtained similarly by using Dirichlet's approximation theorem on simultaneous Diophantine approximation (e.g. \cite[Section 1.10]{c71}) instead.

One might wonder if the oracle complexity upper bound for their polynomial time algorithm can be improved using a better analysis.   
However, there is some fundamental issue in getting such an improvement. 
In particular, the upper bound of $2^{O(n^2)}$ on $q$ in efficient simultaneous Diophantine approximation corresponds to the $2^{O(n)}$-approximation factor of the Shortest Vector Problem in lattices, first obtained by Lenstra, Lenstra and Lov\'asz~\cite{lll82}. 
Despite forty years of effort, this approximation factor has only been improved slightly to $2^{n \log \log (n)/\log n}$ for polynomial time algorithms \cite{aks01}.

\subsubsection{Lattices to the Rescue: A Reduction to the Shortest Vector Problem}

\label{subsubsec:basic_algo_tech}
To bypass the previous bottleneck and prove Theorem~\ref{thm:main}, we give a reduction to the Shortest Vector Problem directly. 
We give a new method to find the hyperplane for dimension reduction based on an approximately shortest vector of certain lattice, and analyze its oracle complexity via a novel potential function that captures simultaneously the volume of the search set $K$ and the density of the lattice. 
The change in the potential function after dimension reduction is analyzed through a high dimensional slicing lemma. 
The details for this algorithm and its analysis are given in \Cref{sec:meta_algo} and \ref{sec:implementation}.

\medskip

\noindent \textbf{Finding the hyperplane.} We maintain a polytope $K$ (which we assume to be centered at $0$ for simplicity) using an efficient implementation of the center of gravity method due to Bertsimas and Vempala~\cite{bv04}. 
The following sandwiching condition is standard in convex geometry
\begin{align} \label{eq:sandwiching_intro}
E(\Cov(K)^{-1}) \subseteq K \subseteq 2n \cdot E(\Cov(K)^{-1}),
\end{align}
where $\Cov(K)$ is the covariance matrix of the uniform distribution over $K$. 
Sufficiently good approximation to $\Cov(K)$ can be obtained efficiently by sampling from $K$~\cite{bv04} so we ignore any computational issue for now.

To find a hyperplane $P$ that contains all integral points in $K$, it suffices to preserve all the integral points in the outer ellipsoid $E = 2n \cdot E(\Cov(K)^{-1})$ on the RHS of \eqref{eq:sandwiching_intro}. 
Let $x \in E \cap \mathbb{Z}^n$ be an arbitrary integral point. 
For any vector $v$, 
\begin{align} \label{eq:ellipsoid_width}
|v^\top x| \leq \norm{v}_{\Cov(K)} \cdot \norm{x}_{\Cov(K)^{-1}} \leq 2n \cdot \norm{v}_{\Cov(K)}.
\end{align}
As long as $\norm{v}_{\Cov(K)} < 1/10n$ and $v^\top x$ is an integer, we can conclude that $v^\top x = 0$ and this implies that all integral points in $K$ lie on the hyperplane $P = \{x: v^\top x = 0\}$. 
Note that by \eqref{eq:ellipsoid_width}, such a vector $v$ with small $\|v\|_{\Cov(K)}$ essentially controls the ellipsoid width $\width_E(v) := \max_{x \in E} v^\top x - \min_{x \in E} v^\top x$. 

One might attempt to guarantee that $v^\top x$ is integral by choosing $v$ to be an integral vector. 
However, this idea has a fundamental flaw: as the algorithm reduces the dimension by restricting on a subspace $W$, the set of integral points on $W$ might become much {\em sparser}. 
As such, one needs $\vol(K)$ to be very small to guarantee that $\norm{v}_{\Cov(K)} < 1/10n$ and this results in a very large oracle complexity.

To avoid this issue, we take $v = \Pi_W(z) \neq 0$ as the projection of some integral point $z \in \mathbb{Z}^n$ on $W$, where $W$ is the subspace on which $K$ lies. 
Since $z - v \in W^\bot$, we have $v^\top x = z^\top x$ and this guarantees that $v^\top x$ is integral. 
For the general case where $K$ is not centered at $0$, a simple rounding procedure computes the desired hyperplane. 
We postpone the details of constructing the hyperplane to \Cref{lem:reduce_dimension}.

How do we find a vector $v \in \Pi_W(\mathbb{Z}^n) \setminus \{0\}$ that satisfies $\norm{v}_{\Cov(K)} < 1/10n$? 
This is where lattices come into play. 
In particular, since $ \Lambda = \Pi_W(\mathbb{Z}^n)$ forms a lattice, we can apply any $\gamma$-approximation algorithm for the Shortest Vector Problem. 
If the shortest non-zero vector in $\Lambda$ has $\Cov(K)$-norm at most $1/10 \gamma n$, then we can find a non-zero vector $v$ that satisfies $\norm{v}_{\Cov(K)} < 1/10n$. 

\medskip 
\noindent \textbf{The algorithm.}
This new approach for finding the hyperplane immediately leads to the following algorithm: 
we run the approximate center of gravity method for one step to decrease the volume of the polytope $K$ by a constant factor; then we run the $\gamma$-approximation algorithm for SVP to find a non-zero vector $v$ for dimension reduction.
If $\norm{v}_{\Cov(K)} \geq 1/10n$, then we continue to run the cutting plane method; otherwise, we use the above procedure to find a hyperplane $P$ containing all integral points in $K$, update the polytope $K$ to be $K \cap P$ and recurse.

\medskip 
\noindent \textbf{Potential function analysis.}  
To analyze such an algorithm, one might attempt to use $\vol(K)$ as the potential function as in the Gr\"otschel-Lov\'asz-Schrijver approach. 
However, one quickly realizes that $\vol(K \cap P) / \vol(P)$ can be as large as $\norm{v}_2 / \norm{v}_{\Cov(K)}$. 
While it's expectable that $\norm{v}_{\Cov(K)}$ is not too small since we are frequently checking for a short lattice vector, one has no control over $\norm{v}_2$ in general. 

Key to our analysis is the potential function $\Phi = \vol(K) \cdot \det(\Lambda)$ that measures simultaneously the volume of $K$ and the covolume $\det(\Lambda)$ of the lattice $\Lambda$. 
Essentially, this potential function controls the lattice width $\min_{v \in \Lambda \setminus \{0\}} \width_E(v)$ of the outer ellipsoid $E$. 
In fact, Minkowski's first theorem (\Cref{thm:minkowski_shortest_vector}) implies that there always exists a vector $v \in \Lambda \setminus \{0\}$ such that $\width_E(v) \leq \poly(n) \cdot \Phi^{1/n}$, and thus the potential function would never get too small before dimension reduction takes place.  

Continuing with the analysis via the potential function $\Phi$, while $\vol(K)$ increases by $\norm{v}_2 / \norm{v}_{\Cov(K)}$ after the dimension reduction, standard fact on lattice projection (Fact~\ref{fact:lattice_projection}) shows that the covolume of the lattice decreases by a factor of $\norm{v}_2$.
The decrease in the covolume of the lattice thus elegantly cancels out the increase in $\vol(K)$, leading to an overall increase in the potential of at most $1/\norm{v}_{\Cov(K)} = O(\gamma n)$.  
It follows that the total increase in the potential over all $n$ dimension reduction steps is at most $(\gamma n)^n$.
Note that each cutting plane step still decreases the potential function by a constant factor since the lattice is unchanged. 
Therefore, the total number of oracle calls is at most $O(n \log(\gamma n))$. 

\medskip 
\noindent \textbf{High dimensional slicing lemma for consecutive dimension reduction steps.} 
The argument above ignores a slight technical issue: while we can guarantee that $\norm{v}_{\Cov(K)} \geq 1/\gamma n$ after cutting plane steps by checking for short non-zero lattice vectors, it's not clear why $\norm{v}_{\Cov(K)}$ cannot be too small after a sequence of dimension reduction steps.  
It turns out that this can happen only when $\Cov(K)$ becomes much smaller (e.g. the hyperplane $P$ is far from the centroid of $K$) after dimension reduction, in which case $\vol(K)$ as well as the potential also become much smaller. 

To formally analyze the change in the potential function after a sequence of $k$ consecutive dimension reduction steps, we note that the polytope $K$ (which we assume to be isotropic for simplicity) becomes a ``slice'' $K \cap W$ and the lattice $\Lambda$ becomes the projected lattice $\Pi_W(\Lambda)$, where $W$ is a subspace. 
One can show using standard convex geometry tools that $\vol(K \cap W) / \vol(K)$ is at most $k^{O(k)}$, and via Minkowski's first theorem that $\det(\Pi_W(\Lambda)) / \det(\Lambda)$ is at most $\sqrt{k}^k / \lambda_1(\Lambda)^k$, where $\lambda_1(\Lambda)$ is the Euclidean length of the shortest non-zero vector in $\Lambda$. 
We leave the details of this high dimensional slicing lemma to Lemma~\ref{lem:high_dim_slicing}.  
Since we know that $\lambda_1(\Lambda) \geq 1/\gamma n$ in the first dimension reduction step, the potential function increases by a factor of at most $(\gamma n)^{O(k)}$ over a sequence of $k$ consecutive dimension reduction steps. 
This gives a more precise analysis of the $O(n \log(\gamma n))$ oracle complexity.


\section{Preliminaries}

\subsection{Notations}

We use $\mathbb{R}_+$ to denote the set of non-negative real numbers. 
For any positive integer $n$, we use $[n]$ to denote the set $\{1, \cdots, n\}$.
Given a real number $a \in \mathbb{R}$, the floor of $a$, denoted as $\lfloor a \rfloor$, is the largest integer that is at most $a$. 
Define the closest integer to $a$, denoted as $\lceil a \rfloor$, to be $\lceil a \rfloor := \lfloor a + 1/2 \rfloor$. 
Given an integer $\varphi \geq 0$ and $a \in \mathbb{R}$, we use $\lceil a \rfloor_\varphi$ to denote the closest rational number to $a$ with denominator at most $2^\varphi$.  
Given integers $a_1, \cdots, a_m$ which are not all $0$, we denote $\gcd(a_1, \cdots, a_m)$ their greatest common divisor.
Given non-zero integers $a_1, \cdots, a_m$, we denote $\lcm(a_1, \cdots, a_m)$ their least common multiple. 

For any $i \in [n]$, we denote $e_i$ the $i$th standard orthonormal basis vector of $\mathbb{R}^n$.  
We use $B_p(R)$ to denote the $\ell_p$-ball of radius $R$ in $\mathbb{R}^n$ and $B_p = B_p(1)$ the unit $\ell_p$-ball. 
For any set of vectors $V \subseteq \mathbb{R}^n$, we use $\spn \{V\}$ to denote the linear span of vectors in $V$. 
Throughout, a subspace $W$ is a linear subspace of $\mathbb{R}^n$ with $0 \in W$; an affine subspace $W$ is a translation of a subspace of $\mathbb{R}^n$ (and thus might not pass through the origin). 
Given a subspace $W$, we denote $W^\bot$ the orthogonal complement of $W$ and $\Pi_W(\cdot)$ the orthogonal projection onto the subspace $W$.
Given a PSD matrix $A \in \mathbb{R}^{n \times n}$ and a subspace $V \subseteq \mathbb{R}^n$, we say $A$ has full rank on $V$ if $\rank(A)=\dim(V)$ and the eigenvectors corresponding to non-zero eigenvalues of $A$ form an orthogonal basis of $V$.  

Given a subspace $V \subseteq \mathbb{R}^n$ and a PSD matrix $A \in \mathbb{R}^{n \times n}$ that has full rank on $V$, 
the function $\langle \cdot, \cdot \rangle_A$ given by $\langle x, y \rangle_A = x^\top A y$ defines an inner product on $V$. 
The inner product $\langle \cdot, \cdot \rangle_A$ induces a norm on $V$, i.e. $\norm{x}_A = \sqrt{\langle x, x \rangle_A}$ for any $x \in V$, which we call the $A$-norm. 
Given a point $x_0 \in \mathbb{R}^n$ and a PSD matrix $A \in \mathbb{R}^{n \times n}$, we use $E(x_0, A)$ to denote the (might not be full-rank) ellipsoid given by $E(x_0, A) := \{x \in x_0 + W_A: (x - x_0)^\top A (x - x_0) \leq 1\}$, where $W_A$ is the subspace spanned by eigenvectors corresponding to non-zero eigenvalues of $A$. 
When the ellipsoid is centered at $0$, we use the short-hand notation $E(A)$ to denote $E(0,A)$.

\subsection{Lattices}
Given a set of linearly independent vectors $b_1, \cdots, b_k \in \mathbb{R}^n$, denote $\Lambda(b_1, \cdots, b_k) = \{\sum_{i=1}^k \lambda_i b_i, \lambda_i \in \mathbb{Z}\}$ the lattice generated by $b_1, \cdots, b_k$. 
Here, $k$ is called the rank of the lattice. 
A lattice is said to have full-rank if $k = n$.
Any set of $k$ linearly independent vectors that generates the lattice $\Lambda = \Lambda(b_1, \cdots, b_k)$ under integer linear combinations is called a basis of $\Lambda$.
In particular, the set $\{b_1, \cdots, b_k\}$ is a basis of $\Lambda$. 
Different basis of a full-rank lattice are related by unimodular matrices, which are integer matrices with determinant $\pm 1$.

Given a basis $B \in \mathbb{R}^{n \times k}$, the fundamental parallelepiped of $\Lambda = \Lambda(B)$ is the polytope $\mathcal{P}(B):=\{\sum_{i=1}^k \lambda_i b_i: \lambda_i \in [0,1), \forall i \in [k]\}$. 
The determinant of the lattice (also known as the covolume), denoted as $\det(\Lambda)$, is defined to be the volume of the fundamental parallelepiped, which is independent of the basis.
We also define the notion of dual lattices below.

\begin{definition}[Dual lattice]
Given a lattice $\Lambda \subseteq \mathbb{R}^n$, the dual lattice $\Lambda^*$ is the set of all vectors $x \in \spn\{\Lambda\}$ such that $\langle x, y \rangle \in \mathbb{Z}$ for all $y \in \Lambda$. 
\end{definition}

We refer interested readers to standard textbooks (e.g. \cite{s98}) for a more comprehensive introduction to lattice theory.



\subsubsection{Lattice Projection and Intersection with Subspaces} 

The following standard facts on lattice projection follow from Gram-Schmidt orthogonalization.

\begin{fact}[Lattice projection] \label{fact:lattice_projection}
Let $\Lambda$ be a full-rank lattice in $\mathbb{R}^n$ and $W$ be a linear subspace such that $\dim(\spn\{\Lambda \cap W \}) = \dim(W)$. 
Then we have 
\begin{align*}
\det(\Lambda) = \det(\Lambda \cap W) \cdot \det(\Pi_{W^\bot}(\Lambda)) .
\end{align*}
\end{fact}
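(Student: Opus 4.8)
The plan is to prove the identity by producing a basis of $\Lambda$ that is \emph{adapted} to the sublattice $\Lambda\cap W$ and then reading off all three determinants from a single Gram--Schmidt orthogonalization. Set $k=\dim(W)$, so by hypothesis $\Lambda\cap W$ spans $W$ and is therefore a lattice of rank $k$ inside $W$. First I would check that $\Lambda\cap W$ is a \emph{saturated} (primitive) sublattice of $\Lambda$: if $v\in\Lambda$ and $mv\in\Lambda\cap W$ for some positive integer $m$, then $v=\tfrac1m(mv)\in W$ because $W$ is a linear subspace, hence $v\in\Lambda\cap W$. Thus the quotient group $\Lambda/(\Lambda\cap W)$ is torsion-free, hence free of rank $n-k$, and consequently any basis $b_1,\dots,b_k$ of $\Lambda\cap W$ extends to a basis $b_1,\dots,b_n$ of $\Lambda$. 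This existence statement is the one step where I would invoke standard structure theory of finitely generated abelian groups, and it is the only place where the hypothesis on $W$ is used.

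Next I would identify $\Pi_{W^\bot}(\Lambda)$ explicitly. Since $\Pi_{W^\bot}$ is linear and $\Pi_{W^\bot}(b_i)=0$ for $i\le k$ (as $b_i\in W$), every lattice point $\sum_{i=1}^n a_i b_i$ projects to $\sum_{i=k+1}^n a_i\,\Pi_{W^\bot}(b_i)$, so $\Pi_{W^\bot}(\Lambda)=\Lambda\big(\Pi_{W^\bot}(b_{k+1}),\dots,\Pi_{W^\bot}(b_n)\big)$. These $n-k$ vectors are linearly independent: $b_1,\dots,b_n$ span $\Rn$, so their $\Pi_{W^\bot}$-images span $W^\bot$, and since the first $k$ images vanish the last $n-k$ already span $W^\bot$ and hence form a basis of it. So $\Pi_{W^\bot}(\Lambda)$ is a full-rank lattice in $W^\bot$ with the claimed generating set.

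Finally I would run Gram--Schmidt on $b_1,\dots,b_n$, obtaining an orthogonal system $b_1^*,\dots,b_n^*$ with $b_i^*=b_i-\sum_{j<i}\mu_{ij}b_j^*$, and recall that for any lattice the determinant equals $\prod_i\norm{b_i^*}_2$ (the $\norm{\cdot}_2$ here is the Euclidean norm). Because $b_1,\dots,b_k\in W$, the vectors $b_1^*,\dots,b_k^*$ form an orthogonal basis of $W$, giving $\det(\Lambda\cap W)=\prod_{i=1}^k\norm{b_i^*}_2$, while $b_{k+1}^*,\dots,b_n^*\in W^\bot$. For $i>k$ the partial sum $\sum_{j=1}^k\mu_{ij}b_j^*$ is exactly the orthogonal projection of $b_i$ onto $W$, so $b_i-\sum_{j=1}^k\mu_{ij}b_j^*=\Pi_{W^\bot}(b_i)$ and therefore $b_i^*=\Pi_{W^\bot}(b_i)-\sum_{k<j<i}\mu_{ij}b_j^*$. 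This shows that $b_{k+1}^*,\dots,b_n^*$ is precisely the Gram--Schmidt orthogonalization of $\Pi_{W^\bot}(b_{k+1}),\dots,\Pi_{W^\bot}(b_n)$, whence $\det(\Pi_{W^\bot}(\Lambda))=\prod_{i=k+1}^n\norm{b_i^*}_2$. Multiplying the two products yields $\det(\Lambda\cap W)\cdot\det(\Pi_{W^\bot}(\Lambda))=\prod_{i=1}^n\norm{b_i^*}_2=\det(\Lambda)$, as desired. I expect the only genuinely delicate point to be the adapted-basis claim in the first paragraph; everything after that is bookkeeping with the Gram--Schmidt coefficients, and the product formula for $\det$ conveniently sidesteps any need to worry about determinants of non-full-rank bases directly.
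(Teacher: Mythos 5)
Your proof is correct, and it follows exactly the route the paper indicates for this fact, which is stated there as a standard consequence of Gram--Schmidt orthogonalization: you extend a basis of $\Lambda\cap W$ to a basis of $\Lambda$ (via saturation of the sublattice) and read off all three determinants from one Gram--Schmidt run. The adapted-basis step and the identification of $b_{k+1}^*,\dots,b_n^*$ with the Gram--Schmidt vectors of the projected basis are handled correctly, so this is a faithful filled-in version of the paper's argument.
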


\begin{fact}[Dual of lattice projection] \label{fact:lattice_duality}
Let $\Lambda$ be a full-rank lattice in $\mathbb{R}^n$ and $W$ be a linear subspace such that $\dim(\spn\{\Lambda \cap W \}) = \dim(W)$. 
Then we have the following duality 
\begin{align*}
(\Pi_W (\Lambda))^* = \Lambda^* \cap W.
\end{align*}
\end{fact}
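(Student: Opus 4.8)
The plan is to prove the asserted set equality directly from the two defining membership conditions, using only that the orthogonal projection $\Pi_W$ is self-adjoint and fixes $W$ pointwise. First I would record that under the hypothesis $\dim(\spn\{\Lambda \cap W\}) = \dim(W)$ we have $\Pi_W(\Lambda) \supseteq \Pi_W(\Lambda \cap W) = \Lambda \cap W$, hence $\spn\{\Pi_W(\Lambda)\} = W$; consequently both $(\Pi_W(\Lambda))^* = \{x \in W : \langle x, z \rangle \in \mathbb{Z} \text{ for all } z \in \Pi_W(\Lambda)\}$ and (using $\spn\{\Lambda\} = \mathbb{R}^n$) $\Lambda^* \cap W = \{x \in W : \langle x, y \rangle \in \mathbb{Z} \text{ for all } y \in \Lambda\}$ are subsets of $W$, and it suffices to show these two subsets of $W$ coincide.

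The single identity that drives both inclusions is: for every $x \in W$ and every $y \in \mathbb{R}^n$,
\[
\langle x, \Pi_W(y) \rangle = \langle \Pi_W(x), y \rangle = \langle x, y \rangle ,
\]
where the first equality is self-adjointness of $\Pi_W$ and the second uses $\Pi_W(x) = x$ for $x \in W$. For $\Lambda^* \cap W \subseteq (\Pi_W(\Lambda))^*$: given $x \in \Lambda^* \cap W$ and an arbitrary $z = \Pi_W(y) \in \Pi_W(\Lambda)$ with $y \in \Lambda$, the identity gives $\langle x, z \rangle = \langle x, y \rangle \in \mathbb{Z}$, so $x \in (\Pi_W(\Lambda))^*$. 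For the reverse inclusion $(\Pi_W(\Lambda))^* \subseteq \Lambda^* \cap W$: given $x \in (\Pi_W(\Lambda))^* \subseteq W$ and an arbitrary $y \in \Lambda$, note $\Pi_W(y) \in \Pi_W(\Lambda)$, so again by the identity $\langle x, y \rangle = \langle x, \Pi_W(y) \rangle \in \mathbb{Z}$; hence $x \in \Lambda^*$, and since $x \in W$ we conclude $x \in \Lambda^* \cap W$. This closes the proof.

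I do not expect a genuine obstacle here — the statement is essentially a formal consequence of self-adjointness of $\Pi_W$ — but the one point that deserves care is making sure the two dual objects are being compared inside the same ambient subspace $W$; that is exactly what the hypothesis $\dim(\spn\{\Lambda \cap W\}) = \dim(W)$ buys us, since it forces $\spn\{\Pi_W(\Lambda)\} = W$ so that the dual of the projected lattice $\Pi_W(\Lambda)$ genuinely lives in $W$ rather than in a proper subspace of it. An alternative, more computational route would instead pick a basis $b_1, \dots, b_n$ of $\Lambda$ with $b_1, \dots, b_k$ a basis of the primitive sublattice $\Lambda \cap W$, express $\Pi_W(\Lambda)$ through this basis, and invoke the dual-basis formula of Fact~\ref{fact:dual_basis}; I would keep the direct argument above as the main proof, since it avoids appealing to the fact that primitive sublattices extend to bases of the ambient lattice.
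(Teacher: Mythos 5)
Your proof is correct. The paper itself offers no argument for this fact --- it is stated as standard, with the remark that it ``follows from Gram--Schmidt orthogonalization,'' which hints at the usual computational route (pick a basis of the primitive sublattice $\Lambda \cap W$, extend it to a basis of $\Lambda$, and compare dual bases, essentially the alternative you sketch at the end). Your direct argument via the identity $\langle x, \Pi_W(y)\rangle = \langle x, y\rangle$ for $x \in W$ is cleaner and entirely sufficient: with the paper's definition of the dual, both sides are exactly $\{x \in W : \langle x, y\rangle \in \mathbb{Z} \ \forall y \in \Lambda\}$, and your two inclusions establish this. Two minor remarks. First, the span claim is even more immediate than you make it: since $\Pi_W$ is linear and $\Lambda$ is full-rank, $\spn\{\Pi_W(\Lambda)\} = \Pi_W(\spn\{\Lambda\}) = \Pi_W(\mathbb{R}^n) = W$, with no appeal to the hypothesis $\dim(\spn\{\Lambda \cap W\}) = \dim(W)$; so your set identity in fact holds verbatim without that hypothesis. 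Second, the real role of the hypothesis (in the paper's rational setting) is to guarantee that $\Pi_W(\Lambda)$ is a genuine discrete lattice so that writing $(\Pi_W(\Lambda))^*$ and later invoking determinants makes sense; your set-theoretic argument never needs discreteness, which is fine for the equality as stated, but it is worth being aware that this is where the hypothesis earns its keep, not in the inclusion chase.
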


\subsubsection{Minkowski's First Theorem}

Minkowski's first theorem~\cite{m53} asserts the existence of a non-zero lattice point in a symmetric convex set with large enough volume. 
An important consequence of it is the following upper bound on $\lambda_1(\Lambda, A)$, the length of the shortest non-zero vector in lattice $\Lambda$ under $A$-norm.

\begin{theorem}[Consequence of Minkowski's first theorem,~\cite{m53}] \label{thm:minkowski_shortest_vector} 
Let $\Lambda$ be a full-rank lattice in $\mathbb{R}^n$ and $A \in \mathbb{R}^{n \times n}$ be a positive definite matrix. 
Then 
\begin{align*}
\lambda_1(\Lambda, A) \leq \sqrt{n} \cdot \det( A^{1/2})^{1/n} \cdot \det(\Lambda)^{1/n} .
\end{align*}
\end{theorem}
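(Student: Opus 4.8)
The plan is to reduce the statement to the classical Minkowski first theorem by a linear change of coordinates that turns the $A$-norm into the Euclidean norm. First I would write $A = A^{1/2} A^{1/2}$ with $A^{1/2}$ the (symmetric, positive definite) square root, and apply the invertible linear map $T = A^{1/2}$ to the lattice, setting $\Lambda' = T\Lambda = A^{1/2}\Lambda$. The key observation is that $\norm{x}_A = \norm{A^{1/2} x}_2$, so for any $y = A^{1/2} x \in \Lambda'$ we have $\norm{y}_2 = \norm{x}_A$; hence $\lambda_1(\Lambda, A) = \lambda_1(\Lambda', \|\cdot\|_2)$, the ordinary Euclidean shortest-vector length of $\Lambda'$. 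Also, since $T$ is linear, $\det(\Lambda') = |\det(T)| \cdot \det(\Lambda) = \det(A^{1/2}) \cdot \det(\Lambda)$.

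Next I would invoke Minkowski's first theorem in its standard Euclidean form: any full-rank lattice $\Gamma$ in $\mathbb{R}^n$ contains a non-zero vector of Euclidean length at most $2 \cdot (\det \Gamma / \vol(B_2))^{1/n}$, where $B_2$ is the unit Euclidean ball; combined with the volume bound $\vol(B_2) \geq (2/\sqrt n)^n$ (equivalently, that a cube of side $2/\sqrt n$ fits inside $B_2$... more precisely one uses $\vol(B_2)^{1/n} \geq c/\sqrt n$, and the cleanest route is to compare with the cross-polytope $B_1 \subseteq B_2$, giving $\vol(B_2) \geq \vol(B_1) = 2^n/n!$ and then $(n!)^{1/n} \leq n$), one gets $\lambda_1(\Gamma, \|\cdot\|_2) \leq \sqrt n \cdot \det(\Gamma)^{1/n}$. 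Applying this to $\Gamma = \Lambda'$ and substituting $\det(\Lambda') = \det(A^{1/2}) \det(\Lambda)$ yields
\begin{align*}
\lambda_1(\Lambda, A) = \lambda_1(\Lambda', \|\cdot\|_2) \leq \sqrt n \cdot \det(\Lambda')^{1/n} = \sqrt n \cdot \det(A^{1/2})^{1/n} \cdot \det(\Lambda)^{1/n},
\end{align*}
which is exactly the claimed bound.

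The only real point requiring care — the "main obstacle," though it is a minor one — is pinning down the correct constant from Minkowski's theorem, i.e.\ justifying that the $\sqrt n$ factor (and no worse) comes out. This reduces to the elementary volume estimate for the Euclidean ball, for which using the inscribed cross-polytope $B_1$ of volume $2^n/n!$ together with $(n!)^{1/n} \leq n$ is the slickest argument and avoids any appeal to the Gamma-function asymptotics of $\vol(B_2)$. Everything else — the change of variables, multiplicativity of the determinant under linear maps, and the norm identity $\norm{x}_A = \norm{A^{1/2}x}_2$ — is routine, and the positive-definiteness of $A$ is exactly what guarantees $A^{1/2}$ exists and is invertible so that $\Lambda'$ is again a full-rank lattice.
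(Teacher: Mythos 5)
Your reduction is the standard one and matches how the paper treats this statement (it is quoted as a direct consequence of Minkowski's first theorem, obtained by passing to the lattice $A^{1/2}\Lambda$): the identities $\norm{x}_A=\norm{A^{1/2}x}_2$ and $\det(A^{1/2}\Lambda)=\det(A^{1/2})\det(\Lambda)$ are exactly the right change of variables, and positive definiteness of $A$ is what makes $A^{1/2}\Lambda$ a full-rank lattice. The strict-inequality issue in Minkowski's theorem (open versus closed ball) is a routine limiting/compactness point and not a real concern.

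The genuine problem is the step you yourself flag as the one needing care: the constant. The cross-polytope comparison does \emph{not} deliver the $\sqrt{n}$ factor. From $\vol(B_2)\geq \vol(B_1)=2^n/n!$ and $\lambda_1(\Gamma,\norm{\cdot}_2)\leq 2\left(\det(\Gamma)/\vol(B_2)\right)^{1/n}$ you only get $\lambda_1\leq (n!)^{1/n}\det(\Gamma)^{1/n}$, and $(n!)^{1/n}\sim n/e$, which is asymptotically larger than $\sqrt{n}$; indeed $n!>n^{n/2}$ for all large $n$, so $\vol(B_1)<(2/\sqrt{n})^n$ and this route cannot recover the claimed bound. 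The fix is the other estimate you mention in passing and should commit to: the inscribed cube $[-1/\sqrt{n},1/\sqrt{n}]^n\subseteq B_2$ gives $\vol(B_2)\geq (2/\sqrt{n})^n$, which yields exactly $\lambda_1\leq \sqrt{n}\cdot\det(\Gamma)^{1/n}$. Cleaner still, bypass the ball altogether: apply Minkowski's first theorem to the cube $[-t,t]^n$ with $t=\det(\Gamma)^{1/n}$, whose volume is $2^n\det(\Gamma)$, to obtain a nonzero lattice vector with $\norm{x}_\infty\leq \det(\Gamma)^{1/n}$, and then $\norm{x}_2\leq\sqrt{n}\,\norm{x}_\infty$. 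With that substitution your argument is complete.
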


\subsubsection{The Shortest Vector Problem and the Lenstra-Lenstra-Lov\'asz Algorithm}
\label{subsec:shortest_vector_lll}

Given a lattice $\Lambda$ and a PSD matrix $A$ that has full rank on $\spn\{\Lambda\}$, the Shortest Vector Problem (SVP) asks to find a shortest non-zero vector in $\Lambda$ under $A$-norm\footnote{Equivalently, one could think of finding an approximately shortest vector under the Euclidean norm in the lattice $A^{1/2} \Lambda$.}, whose length is denoted as $\lambda_1(\Lambda, A)$. 
SVP is one of the most fundamental computational problems in lattice theory and is known to be NP-hard.  
For this problem, the celebrated Lenstra-Lenstra-Lov\'asz (LLL) algorithm~\cite{lll82} finds in polynomial time a $2^{n/2}$-approximation to $\lambda_1(\Lambda,A)$. 
Building on top of a block-reduction algorithm by Schnorr \cite{s87}, Ajtai, Kumar and Sivakumar \cite{aks01} obtained the current best  polynomial time approximation factor of $2^{n \log \log (n)/\log (n)}$ for SVP.

\begin{theorem}[\cite{aks01}] \label{thm:lll}
Given a basis $b_1, \cdots, b_n \in \mathbb{Z}^n$ for lattice $\Lambda$ and a positive definite matrix $A \in \mathbb{Z}^{n \times n}$. 
Let $D \in \mathbb{Z}$ be such that $\norm{b_i}_A^2 \leq D$ for any $i \in [n]$. 
Then there exists an algorithm that outputs in $\poly(n, \log(D))$ arithmetic operations a vector $b_1'$ such that 
\begin{align*}
\norm{b_1'}_A \leq 2^{n \log\log (n)/\log (n)} \cdot \lambda_1(\Lambda,A) .
\end{align*} 
Moreover, the integers occuring in the algorithm have bit sizes at most $\poly(n, \log(D))$.
\end{theorem}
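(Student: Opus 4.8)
The plan is to run the classical LLL basis-reduction algorithm, with the Euclidean inner product replaced everywhere by $\langle x,y\rangle_A := x^\top A y$ (a genuine inner product on $\spn\{\Lambda\}$ since $A$ has full rank there), and to output the first vector $b_1'$ of the resulting reduced basis. Because $A$ is an integer matrix, this substitution changes nothing in the arithmetic-operation or bit-size accounting, so it is cleaner to work with $\langle\cdot,\cdot\rangle_A$ directly than to pass to the (possibly irrational) lattice $A^{1/2}\Lambda$ alluded to in the footnote.

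First I would set up the Gram--Schmidt data relative to $\langle\cdot,\cdot\rangle_A$: vectors $b_i^\ast$ and coefficients $\mu_{ij}$ ($j<i$) defined by $b_i^\ast = b_i - \sum_{j<i}\mu_{ij}b_j^\ast$ with $\mu_{ij} = \langle b_i,b_j^\ast\rangle_A/\norm{b_j^\ast}_A^2$, so the $b_i^\ast$ are pairwise $A$-orthogonal. Call the basis \emph{reduced} if (i) $|\mu_{ij}|\le 1/2$ for all $j<i$ and (ii) $\norm{b_i^\ast+\mu_{i,i-1}b_{i-1}^\ast}_A^2\ge \tfrac34\norm{b_{i-1}^\ast}_A^2$ for $2\le i\le k$. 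The algorithm keeps an index $i$ and repeatedly: size-reduces $b_i$ against $b_{i-1},\dots,b_1$ via the replacements $b_i\leftarrow b_i-[\mu_{ij}]b_j$ (which enforces $|\mu_{ij}|\le 1/2$ and leaves every $b_\ell^\ast$ unchanged); tests (ii) at $i$; if (ii) fails, swaps $b_{i-1}\leftrightarrow b_i$, updates the Gram--Schmidt data, and decrements $i$; otherwise increments $i$. It halts once $i$ exceeds $k$, at which point the basis is reduced.

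Next I would derive the approximation guarantee for a reduced basis. Conditions (i) and (ii) give $\norm{b_i^\ast}_A^2\ge(\tfrac34-\mu_{i,i-1}^2)\norm{b_{i-1}^\ast}_A^2\ge\tfrac12\norm{b_{i-1}^\ast}_A^2$, hence $\norm{b_1}_A^2=\norm{b_1^\ast}_A^2\le 2^{i-1}\norm{b_i^\ast}_A^2$ for every $i$. On the other hand, for any nonzero $x=\sum_i c_ib_i\in\Lambda$ with largest nonzero coordinate index $t$, one has $\langle x,b_t^\ast\rangle_A=c_t\norm{b_t^\ast}_A^2$, so Cauchy--Schwarz gives $\norm{x}_A\ge\norm{b_t^\ast}_A\ge\min_i\norm{b_i^\ast}_A$; therefore $\lambda_1(\Lambda,A)^2\ge\min_i\norm{b_i^\ast}_A^2\ge 2^{-(k-1)}\norm{b_1}_A^2$, which is the claimed bound with $b_1'=b_1$.

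Finally I would bound the cost using the potential $\mathcal D:=\prod_{i=1}^k d_i$, where $d_i:=\det(B_i^\top A B_i)=\prod_{j\le i}\norm{b_j^\ast}_A^2$ and $B_i=[b_1,\dots,b_i]$. Each $d_i$ is a positive integer (determinant of an integer matrix, positive since $B_i^\top A B_i$ is positive definite), so $\mathcal D\ge 1$ throughout; size reduction leaves every $d_i$ unchanged; and a swap at position $i$ replaces $d_{i-1}$ by $d_{i-2}\norm{b_i^\ast+\mu_{i,i-1}b_{i-1}^\ast}_A^2<\tfrac34 d_{i-1}$ (by the failure of (ii)) while fixing the other $d_j$, so each swap shrinks $\mathcal D$ by a factor $<3/4$. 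By Hadamard's inequality $d_i\le\prod_{j\le i}\norm{b_j}_A^2\le D^i$ at the start, so $\mathcal D\le D^{O(k^2)}$ initially and the number of swaps, hence the number of index-changing steps, is $O(k^2\log D)=O(n^2\log D)$. One step costs $O(n^2)$ arithmetic operations (a partial size reduction touches $O(n)$ vectors of dimension $n$), giving $O(n^4\log D)$ in total. For the bit sizes, the denominators of the $\mu_{ij}$ divide the $d_j\le D^n$, and $\norm{b_j^\ast}_A^2=d_j/d_{j-1}\le D^n$ because $d_{j-1}\ge 1$; a standard induction using size-reducedness and $|\mu_{ij}|\le 1/2$ then bounds the $A$-norms, and hence the entries, of all intermediate vectors by $2^{O(n)}D^{O(n)}$, so every number arising has bit size $O(n\log D)$. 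The one genuinely technical point, though an entirely standard one, is precisely this last bookkeeping: one must check that an intermediate vector cannot blow up between a swap and the subsequent size reduction, which is where the interplay of size-reducedness with the integer lower bound $d_{j-1}\ge 1$ is used.
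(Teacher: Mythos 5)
The paper does not prove this statement---it is quoted as a black-box result from the cited Lenstra--Lenstra--Lov\'asz paper---and your proposal is a correct reconstruction of that classical argument, transported verbatim to the inner product $\langle x,y\rangle_A = x^\top A y$ (size reduction, the $3/4$ Lov\'asz condition, the $2^{k-1}$ bound via $\min_i \norm{b_i^*}_A \le \lambda_1$, and the integer potential $\prod_i \det(B_i^\top A B_i)$ for the $O(n^4\log D)$ operation count and $O(n\log D)$ bit sizes). So you have in effect re-proved the cited theorem by the same route as the original reference; no gaps beyond the standard bit-size bookkeeping you already flagged.
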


In fact, for any integer $r > 1$, \cite{aks01} gave a $2^{O(r)}\poly(n)$-time $r^{O(n/r)}$-approximation algorithm for SVP, allowing a smooth tradeoff between time and approximation quality. 

For solving SVP exactly, the state-of-the-art is a deterministic $\widetilde{O}(2^{2n})$-time and $\widetilde{O}(2^n)$-space algorithm given by Micciancio and Voulgaris~\cite{mv13}, and a randomized $2^{n+o(n)}$-time and space algorithm due to Aggarwal et al.~\cite{adrs15}.
We refer to these excellent papers and the references therein for a comprehensive account of the rich history of SVP.

\subsubsection{Rational Polyhedra} \label{subsubsec:rational_polyhedra}

We start with the definition of the LCM vertex complexity of a rational vector. 

\begin{definition}[LCM vertex complexity] \label{defn:vertex_complexity}
Given a rational vector $a = (p_1/q_1, \cdots, p_n/q_n)$, where integers $p_i$ and $q_i \geq 1$ are coprime for all $i \in [n]$, we define its LCM vertex complexity to be the smallest integer $\varphi \geq 0$ such that the 1-dimensional lattice $L_a := \{a^\top z: z \in \mathbb{Z}^n\}$ is a sub-lattice of $\mathbb{Z}/q$ for some positive integer $q \leq 2^{\varphi}$. 
\end{definition}

In particular, the number $q$ above is $\lcm(q_1, \cdots, q_n)$. When $\gcd(p_1, \cdots, p_n) = 1$, by B\'ezout's identity, we in fact have that $L_a = \mathbb{Z}/q$. 
We next formally define the notion of rational polyhedra with bounded LCM vertex complexity. 

\begin{definition}[Rational polyhedra with bounded LCM vertex complexity] \label{defn:rational_polyhedra}
A bounded convex set $K \subseteq \mathbb{R}^n$ is a rational polyhedron with LCM vertex complexity at most $\varphi \geq 0$ if $K$ is a polyhedron and the LCM vertex complexity of every vertex of $K$ is at most $\varphi$. 
\end{definition}

For convenience, we define the set of all rational vectors with bounded LCM vertex complexity. 

\begin{definition}[Rational vectors with bounded LCM vertex complexity] \label{defn:Sphi}
For any integer $\varphi \geq 0$, we define $S_\varphi^n$ the set of all rational vectors in $\mathbb{R}^n$ with LCM vertex complexity at most $\varphi$. 
\end{definition}

\begin{remark} [Different definitions] \label{remark:compare_def}
We remark that our definition of LCM vertex complexity in \Cref{defn:vertex_complexity} is different from the standard definition of vertex complexity in the literature used by Gr\"otschel, Lov\'asz and Schrijver \cite{gls88}, who defined the vertex complexity of a rational vector $a$ to be its binary description length, i.e. bit complexity. 
The LCM vertex complexity of a rational vector as in \Cref{defn:vertex_complexity} is always smaller than its bit complexity, and in fact might be much smaller. 
The reason we deviate from Gr\"otschel, Lov\'asz and Schrijver's more standard notion of vertex complexity is that \Cref{defn:vertex_complexity} allows a slightly cleaner presentation of the results and proofs in this paper. In particular, one can obtain the results and proofs in the setting of integral minimizers by taking $\varphi = 0$. 
\end{remark}


\subsection{Convex Geometry}

A function $g: \mathbb{R}^n \rightarrow \mathbb{R}_+$ is log-concave if its support $\supp(g)$ is convex and $\log(g)$ is concave on $\supp(g)$. 
An integrable function $g: \mathbb{R}^n \rightarrow \mathbb{R}_+$ is a density function, if $\int_{\mathbb{R}^n} g(x) dx = 1$. 
The centroid of a density function $g: \mathbb{R}^n \rightarrow \mathbb{R}_+$ is defined as $\cg(g) = \int_{\mathbb{R}^n} g(x) x dx$; the covariance matrix of the density function $g$ is defined as $\Cov(g) = \int_{\mathbb{R}^n} g(x) (x - \cg(g)) (x - \cg(g))^\top d x$. 
A density function $g: \mathbb{R}^n \rightarrow \mathbb{R}_+$ is isotropic, if its centroid is $0$ and its covariance matrix is the identity matrix, i.e. $\cg(g) = 0$ and $\Cov(g) = I$.

A typical example of a log-concave distribution is the uniform distribution over a convex body $K \subseteq \mathbb{R}^n$.  
Given a convex body $K$ in $\mathbb{R}^n$, its volume is denoted as $\vol(K)$. 
The centroid (resp. covariance matrix) of $K$, denoted as $\cg(K)$ (resp. $\Cov(K)$), is defined to be the centroid (resp. covariance matrix) of the uniform distribution over $K$. 
A convex body $K$ is said to be isotropic if the uniform density over it is isotropic. 
Any convex body can be put into its isotropic position via an affine transformation.

Sometimes we will be working with a bounded convex set $K \subseteq W$, where $W$ is an affine subspace that might not be full dimensional. 
For convenience, we extend the definitions above to this case by first applying a linear transformation and then restricting to $W$ so that $K$ becomes full-dimensional.

\begin{theorem}[Brunn's principle] \label{thm:brunn}
Let $K$ be a convex body and $W$ be a subspace in $\mathbb{R}^n$. Then the function $g_{K,W}: W^\bot \rightarrow \mathbb{R}_+$ defined as $g_{K,W}(x) := \vol(K \cap (W + x))$ is log-concave on its support. 
\end{theorem}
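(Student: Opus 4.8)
The plan is to reduce the statement to the classical one-dimensional Brunn--Minkowski inequality applied along lines inside the subspace $W$. First I would fix two points $x_0, x_1 \in W^\bot$ in the support of $g_{K,W}$ and a parameter $\lambda \in [0,1]$, and set $x_\lambda = (1-\lambda) x_0 + \lambda x_1$. The goal is the inequality
\begin{align*}
g_{K,W}(x_\lambda) \;\geq\; g_{K,W}(x_0)^{1-\lambda} \cdot g_{K,W}(x_1)^{\lambda}.
\end{align*}
The key observation is that the slices $K_i := K \cap (W + x_i)$, viewed as subsets of the $\dim(W)$-dimensional affine space $W + x_i$, satisfy the inclusion
\begin{align*}
(1-\lambda)\, K_0 + \lambda\, K_1 \;\subseteq\; K \cap (W + x_\lambda),
\end{align*}
where the Minkowski combination is formed after identifying each $W + x_i$ with $W$ via the canonical translation (equivalently, via $\Pi_W$, since the displacement vectors $x_i$ lie in $W^\bot$). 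This inclusion is just convexity of $K$: if $y_0 \in K_0$ and $y_1 \in K_1$, then $(1-\lambda) y_0 + \lambda y_1 \in K$ and its projection onto $W^\bot$ is exactly $x_\lambda$, so it lies in the target slice.

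Next I would invoke the Brunn--Minkowski inequality in the $d$-dimensional space $W$, where $d = \dim(W)$: for nonempty compact sets $A, B \subseteq W$,
\begin{align*}
\vol_d\big((1-\lambda) A + \lambda B\big)^{1/d} \;\geq\; (1-\lambda)\,\vol_d(A)^{1/d} + \lambda\,\vol_d(B)^{1/d}.
\end{align*}
Combining this with the inclusion above and the arithmetic--geometric mean inequality $ (1-\lambda) a + \lambda b \geq a^{1-\lambda} b^\lambda$ for $a,b \geq 0$ gives
\begin{align*}
g_{K,W}(x_\lambda) \;=\; \vol_d\big(K \cap (W+x_\lambda)\big) \;\geq\; \vol_d\big((1-\lambda)K_0 + \lambda K_1\big) \;\geq\; g_{K,W}(x_0)^{1-\lambda}\, g_{K,W}(x_1)^{\lambda},
\end{align*}
which is exactly log-concavity of $g_{K,W}$ on its support. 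Convexity of the support itself follows from the same inclusion (a Minkowski sum of nonempty sets is nonempty), or simply from the fact that $\supp(g_{K,W}) = \Pi_{W^\bot}(K)$ is a linear image of a convex set.

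The only genuinely delicate points are bookkeeping rather than mathematical depth: one must be careful that "volume" in the definition of $g_{K,W}$ means $d$-dimensional Lebesgue measure on the affine slice (so that the Brunn--Minkowski inequality applies with the correct exponent $1/d$), and that the identification of the parallel affine subspaces $W + x_i$ with $W$ is the isometric one induced by translation along $W^\bot$ — this is what makes the Minkowski combination in the displayed inclusion well-defined and makes volumes transform correctly. A fully rigorous treatment would also note that it suffices to prove the inequality for compact $K$ (the general convex body case follows by exhausting $K$ with compact convex subsets, or one simply assumes $K$ compact as is standard), and that slices of empty interior contribute $g_{K,W} = 0$, consistent with the claimed log-concavity only on the support. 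I expect writing the identification of affine subspaces cleanly to be the main place where care is needed; everything else is a direct appeal to Brunn--Minkowski.
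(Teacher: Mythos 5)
Your proof is correct. The paper does not prove Theorem~\ref{thm:brunn} at all --- it is quoted as a standard convex-geometry fact (the classical Brunn/Brunn--Minkowski principle) --- and your argument, namely the slice inclusion $(1-\lambda)K_0 + \lambda K_1 \subseteq K \cap (W + x_\lambda)$ from convexity, followed by the Brunn--Minkowski inequality in $\dim(W)$ dimensions and the weighted AM--GM step, is exactly the standard textbook derivation the paper implicitly relies on, so there is nothing further to reconcile.
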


\begin{theorem}[Property of log-concave density, Theorem 5.14 of~\cite{lv07}] \label{thm:high_dim_logconcave}
Let $f: \mathbb{R}^n \rightarrow \mathbb{R}_+$ be an isotropic log-concave density function. 
Then we have $f(x) \leq 2^{8n} n^{n/2}$ for every $x$. 
\end{theorem}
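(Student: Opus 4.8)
The plan is to reduce the statement to a single volume estimate for a deep super-level set of $f$. Write $M := \|f\|_\infty$ and let $z$ be a point where the maximum is attained (so $-\log f$ is convex with minimizer $z$). One half of the estimate is immediate: for the level set $K := \{x : f(x) \ge e^{-2n}M\}$ we have $1 = \int f \ge e^{-2n}M\cdot\vol(K)$, so $\vol(K) \le e^{2n}/M$. All the work is therefore in the matching \emph{lower} bound $\vol(K) \ge (c/\sqrt n)^n$ for a universal $c>0$; granting it, $M \le e^{2n}(\sqrt n/c)^n = (e^2/c)^n\, n^{n/2} \le 2^{8n} n^{n/2}$ as soon as $c \ge e^2/2^8$.

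First I would show that $K$ carries at least half the mass of $\mu_f$, i.e. $\int_K f \ge 1/2$. Writing $f = e^{-\phi}$ with $\phi$ convex, Brunn--Minkowski makes $b \mapsto \vol(\{\phi \le \phi_{\min}+b\})^{1/n} = \vol(\{f \ge e^{-b}M\})^{1/n}$ concave on $[0,\infty)$ and equal to $0$ at $b=0$; hence $\psi(b) := \vol(\{f \ge e^{-b}M\})$ satisfies that $\psi(b)/b^n$ is non-increasing. A short computation gives $\int_{\mathbb R^n} f\,\mathbf{1}_{\{f < e^{-2n}M\}} = \big(\int_{2n}^\infty \psi(b)e^{-b}\,db\big)\big/\big(\int_0^\infty \psi(b)e^{-b}\,db\big)$, and since $\psi(b)/b^n$ is non-increasing a Chebyshev/FKG correlation argument shows this ratio is maximized by the homogeneous profile $\psi(b)\propto b^n$, for which it equals $\mathbb{P}(\mathrm{Gamma}(n+1,1) \ge 2n) < 1/2$ for every $n\ge1$. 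Thus $\int_K f \ge 1/2$.

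Next I would use isotropy to convert ``$K$ captures half the mass'' into a volume bound. The base case $n=1$ is that an isotropic log-concave density on $\mathbb R$ has sup-norm at most $1$ (the one-sided exponential being the extremal case). Since every one-dimensional marginal of $f$ is isotropic log-concave, any slab of width $w$ has $\mu_f$-measure at most $w$; as $\mu_f(K)\ge 1/2$, the convex body $K$ lies in no slab of width $<1/2$, i.e. it has minimal width at least $1/2$. Finally $K$ is not just any body of large minimal width but a level set of a density in isotropic position, and such bodies are well-rounded: by a Milman--Pajor type argument $K$ contains (after recentering) a Euclidean ball of radius $\Omega(1)$, whence $\vol(K) \ge \Omega(1)^n\cdot\vol(B_2^n) \ge (c/\sqrt n)^n$ because $\vol(B_2^n)^{1/n}\sim\sqrt{2\pi e/n}$. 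Combining with $\vol(K)\le e^{2n}/M$ finishes the proof.

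The hard part will be exactly this last geometric step: getting the exponent base $\sqrt n$ rather than $n$. Passing through an inscribed ball via John's theorem applied blindly to $K$ only gives $\vol(K) \ge (c/n^{3/2})^n$ (John costs a factor $n$ and $\vol(B_2^n)^{1/n}\sim 1/\sqrt n$ costs another), which would yield the much weaker $M \le (Cn)^{3n/2}$; and a greedy inscribed simplex using only ``minimal width $\ge 1/2$'' gives $\vol(K)\ge (1/2)^n/n! \sim (c/n)^n$, still losing a $\sqrt n$ per coordinate. Recovering the stated $n^{n/2}$ genuinely requires that a convex body of minimal width $w$ has volume $\ge (cw/\sqrt n)^n$ — the extremality of the regular simplex — equivalently that a body in isotropic position contains a ball of radius $\Omega(1)$ (the elementary linear-in-$\sqrt n$ bound on the isotropic constant). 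So it is this convex-geometry input, not the routine one-dimensional and $\Gamma$-tail estimates, that carries the argument.
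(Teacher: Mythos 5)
First, note that the paper does not prove this statement at all: it is imported verbatim from Lov\'asz--Vempala \cite{lv07} (their Theorem 5.14), so your argument has to be judged on its own. Up to the last step it is correct: $\vol(K)\le e^{2n}/M$ for $K=\{f\ge e^{-2n}M\}$ is immediate; the Brunn--Minkowski argument that $\psi(b)^{1/n}$ is concave, hence $\psi(b)/b^{n}$ non-increasing, together with the Chebyshev/FKG comparison against the profile $b^{n}$ and the $\Gamma(n+1,1)$ tail bound, does give $\int_K f\ge 1/2$; and combining this with the one-dimensional fact that isotropic log-concave marginals have sup-norm at most $1$ correctly yields that $K$ has width at least $1/2$ in every direction. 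The genuine gap is the step you yourself flag as the hard part, $\vol(K)\ge (c/\sqrt n)^{n}$. The justification offered does not hold up: $K$ is a super-level set of $f$, not a convex body in isotropic position, and since $f$ varies on $K$ by a factor $e^{2n}$, nothing forces $\Cov$ of the \emph{uniform} measure on $K$ to be comparable to $I$ (for a cone- or simplex-shaped level set the variance of the uniform measure in some direction can be of order $1/n^{2}$ even though $\mu_f(K)\ge 1/2$), so the Milman--Pajor/KLS inradius bound (the paper's Theorem~\ref{thm:isotropic_rounding}) cannot be invoked to place a ball of radius $\Omega(1)$ inside $K$. Moreover, ``extremality of the regular simplex'' for minimal volume at given minimal width is not a citable theorem --- the extremizer is a well-known open problem in dimension $\ge 3$ --- and your claimed equivalence with the isotropic-inradius fact is not an equivalence: pushing that fact through the covariance only gives $\vol(K)\gtrsim (cw/n)^{n}$ (as does Steinhagen's inequality), which would end with $M\le C^{n}n^{n}$, weaker than the stated $2^{8n}n^{n/2}$.

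The good news is that the inequality you actually need --- every convex body of minimal width $w$ in $\mathbb{R}^n$ has volume at least $(cw/\sqrt n)^{n}$ --- is true and has a two-line proof that closes the gap. The width of $K$ in direction $\theta$ equals $h_{K-K}(\theta)$, so minimal width $\ge w$ is equivalent to $K-K\supseteq w B_2$; hence $\vol(K-K)\ge w^{n}\vol(B_2)$, and Rogers--Shephard gives $\vol(K)\ge \binom{2n}{n}^{-1}\vol(K-K)\ge 4^{-n}w^{n}\vol(B_2)$. With $w=1/2$ this yields $M\le e^{2n}\cdot 8^{n}/\vol(B_2)\le C^{n} n^{n/2}$ with $C$ comfortably below $2^{8}$, so your overall scheme goes through once this substitution is made. (Minor quibble: your layer-cake identity for the mass outside $K$ should be an inequality, since the numerator is really $\int_{2n}^{\infty}(\psi(b)-\psi(2n))e^{-b}\,db$, but that only helps you.) As it stands, though, the decisive geometric step rests on an inapplicable lemma and an open problem, so the proof is incomplete; note also that this route is in any case different from the one in \cite{lv07}, which argues pointwise near the centroid rather than through level-set volumes.
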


We also need the following result from~\cite{kls95}.

\begin{theorem}[Ellipsoidal approximation of convex body, \cite{kls95}] \label{thm:isotropic_rounding}
Let $K$ be an isotropic convex body in $\mathbb{R}^n$. Then,
\begin{align*}
\sqrt{\frac{n+1}{n}} \cdot B_2 \subseteq K \subseteq \sqrt{n(n+1)} \cdot B_2 ,
\end{align*}
where $B_2$ is the unit Euclidean ball in $\mathbb{R}^n$. 
\end{theorem}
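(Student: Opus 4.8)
The plan is to follow the localization argument of Kannan--Lov\'asz--Simonovits~\cite{kls95}: reduce each of the two containments to a one-dimensional extremal problem about cross-sectional volumes, whose extremizer is (a one-dimensional marginal of) the regular simplex. Fix a unit vector $\theta$ and let $g_\theta(t) := \vol\big(K \cap (\theta^\bot + t\theta)\big)/\vol(K)$ be the marginal density of the uniform distribution on $K$ in the direction $\theta$; it is a probability density. Since $K$ is isotropic, $\int_{\mathbb{R}} t\, g_\theta(t)\,dt = 0$ and $\int_{\mathbb{R}} t^2\, g_\theta(t)\,dt = 1$, and $\supp(g_\theta)$ is the interval $[-h_K(-\theta),\, h_K(\theta)]$, where $h_K$ is the support function of $K$. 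The key structural input is the classical Brunn principle: $g_\theta^{1/(n-1)}$ is concave on its support (the $(n-1)$-concave strengthening of the log-concavity statement in Theorem~\ref{thm:brunn}). Finally, $\sqrt{(n+1)/n}\, B_2 \subseteq K$ is equivalent to $h_K(\theta) \ge \sqrt{(n+1)/n}$ for every unit $\theta$ (using $0 \in \interior K$, since $0$ is the centroid), and $K \subseteq \sqrt{n(n+1)}\, B_2$ is equivalent to $\max_{x \in K}\norm{x}_2 \le \sqrt{n(n+1)}$.

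\textbf{The inner bound.} Fix $\theta$, set $r := h_K(\theta)$, and note that $g_\theta$ is an $(n-1)$-concave density supported in $(-\infty, r]$ with mean $0$; the goal is to deduce $r \ge \sqrt{(n+1)/n}$ from $\int t^2 g_\theta = 1$. I would show that among all $(n-1)$-concave densities supported in $(-\infty,r]$ with mean $0$, the second moment is maximized by the ``cone'' density $g^*(t) \propto (t-a)^{n-1}\Ind_{[a,r]}(t)$, with $a<0$ chosen so that the mean vanishes --- precisely the marginal of a regular simplex having a facet orthogonal to $\theta$. This reduction is exactly the content of the localization lemma of~\cite{kls95}; alternatively one argues directly by a mass-shifting argument that preserves $(n-1)$-concavity, $\int g$ and $\int tg$ while not decreasing $\int t^2 g$. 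The mean-zero condition pins down $a$ in terms of $r$, and evaluating the Beta-function integral $\int t^2 g^*\,dt$ shows the maximal second moment is the relevant multiple of $r^2$; setting it equal to $1$ gives the claimed lower bound on $r$.

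\textbf{The outer bound.} Let $p \in K$ attain $R := \max_{x \in K}\norm{x}_2$ and put $\theta := p/R$. Then $\langle x,\theta\rangle \le \norm{x}_2 \le R$ for all $x \in K$, with equality at $p$, so $h_K(\theta) = R$; moreover $K \cap \{\langle x,\theta\rangle = R\} \subseteq \partial B_2(0,R) \cap \{\langle x,\theta\rangle = R\} = \{p\}$, so $g_\theta$ is supported exactly on $[-h_K(-\theta), R]$ and tends continuously to $0$ at its right endpoint $R$. Here the one-dimensional fact runs the other way: among $(n-1)$-concave densities supported in an interval with right endpoint $R$, vanishing there, and with mean $0$, the second moment is \emph{minimized} by the cone $g^*(t) \propto (R-t)^{n-1}\Ind_{[a,R]}(t)$ pointing at $R$ --- again a simplex marginal. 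Setting this minimal second moment equal to $1$ bounds $R$ from above by $\sqrt{n(n+1)}$; the reduction to the cone is again the localization step.

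\textbf{Main obstacle.} The crux is the one-dimensional extremal step: that powers of affine functions extremize a linear/quadratic functional over $(n-1)$-concave densities with prescribed mean and one-sided support. This is precisely where the localization lemma of~\cite{kls95} enters --- the tools quoted in this excerpt (the log-concave form of Brunn's principle and the crude bound of Theorem~\ref{thm:high_dim_logconcave}) only give non-sharp constants. A smaller technical point is the endpoint regularity in the outer bound: one must exclude a positive-volume flat facet of $K$ orthogonal to $\theta$ at distance $R$, which is handled by noting that such a facet would lie on the sphere $\partial B_2(0,R)$ and hence collapse to the single point $p$. Translating the one-dimensional extremal values into the stated numerical constants is then a routine Beta-integral evaluation.
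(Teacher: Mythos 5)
The paper offers no proof of this statement --- it is quoted directly from \cite{kls95} --- so the benchmark is the KLS argument itself, and your sketch is essentially that argument: Brunn's principle gives $(n-1)$-concavity of the directional marginals $g_\theta$, isotropy gives mean zero and second moment one, and the localization lemma reduces each inclusion to a one-dimensional extremal problem whose extremizer is a cone density (a simplex marginal). Structurally this is the right route, and your inner bound comes out fine, indeed stronger than stated: the cone with base at $r=h_K(\theta)$ and mean zero has variance $\tfrac{n}{n+2}r^2$, so one gets $h_K(\theta)\ge\sqrt{(n+2)/n}\ge\sqrt{(n+1)/n}$.

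The genuine problem is the last step of your outer bound, and it cannot be repaired: carrying out your own extremal computation, the cone with apex at $R$, i.e.\ $g^*(t)\propto (R-t)^{n-1}$ on $[R-h,R]$, has mean zero exactly when $h=\tfrac{n+1}{n}R$, and then its variance is $\tfrac{R^2}{n(n+2)}$; setting this equal to $1$ gives $R\le\sqrt{n(n+2)}$, not $\sqrt{n(n+1)}$ as you assert. No sharper argument exists, because the outer inclusion as printed is false: the isotropic regular simplex has circumradius exactly $\sqrt{n(n+2)}$ (for $n=2$, the isotropic triangle has vertices at distance $2\sqrt{2}>\sqrt{6}$; for $n=1$, the isotropic segment $[-\sqrt{3},\sqrt{3}]$ is not contained in $\sqrt{2}\cdot B_2$). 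The correct sharp KLS statement has $n+2$ in both places, $\sqrt{(n+2)/n}\cdot B_2\subseteq K\subseteq\sqrt{n(n+2)}\cdot B_2$, which is exactly what your localization argument proves; this weaker-looking outer constant still suffices for every use in the paper (e.g.\ the sandwiching condition \eqref{eq:sandwiching_conditon} and Lemma~\ref{lem:covariance_change}), since $\sqrt{n(n+2)}\le n+1\le 2n$. The only other caveat is that you delegate the crucial one-dimensional extremality (cone densities maximize, resp.\ minimize, the second moment among $(n-1)$-concave mean-zero densities with the given one-sided support) entirely to the localization lemma of \cite{kls95}; that is acceptable here, since the paper itself only cites that reference, but it is where all the real work lies.
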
 

The following lemma is an immediate consequence of Theorem~\ref{thm:isotropic_rounding}.

\begin{lemma}[Stability of covariance] \label{lem:covariance_change}
Let $K$ be a convex body in $\mathbb{R}^n$ and $x \in K$ satisfies $\norm{x - \cg(K)}_{\Cov(K)^{-1}} \leq 0.1$. 
Let $H$ be a halfspace such that $x \in H$, then we have 
\begin{align*}
\frac{1}{5 n^2} \cdot \Cov(K) \preceq \Cov(K \cap H) \preceq n^2 \cdot \Cov(K) . 
\end{align*}
\end{lemma}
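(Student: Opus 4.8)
The plan is to reduce everything to the isotropic case via an affine map, where Theorem~\ref{thm:isotropic_rounding} gives us clean ball-sandwiching bounds. First I would apply the affine transformation $T$ that puts $K$ into isotropic position, i.e. $T(y) = \Cov(K)^{-1/2}(y - \cg(K))$; covariance matrices transform by conjugation, so $\Cov(K_1) \preceq n^2 \Cov(K)$ is equivalent to $\Cov(T(K_1)) \preceq n^2 I$, and similarly for the lower bound. So it suffices to prove: if $\wt{K}$ is isotropic, $\wt{x} \in \wt{K}$ with $\norm{\wt{x}}_2 \le 0.1$, and $\wt{H}$ is a hyperplane through $\wt{x}$ cutting $\wt{K}$ into $\wt{K}_1, \wt{K}_2$, then $\frac{1}{5n^2} I \preceq \Cov(\wt{K}_1) \preceq n^2 I$.

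For the \emph{upper bound}: by Theorem~\ref{thm:isotropic_rounding}, $\wt{K} \subseteq \sqrt{n(n+1)} B_2$, hence $\wt{K}_1 \subseteq \sqrt{n(n+1)} B_2$ as well. Any convex body contained in $r B_2$ has covariance $\preceq r^2 I$ (since $\Cov(\wt K_1) = \E[(z-\cg(\wt K_1))(z-\cg(\wt K_1))^\top]$ and $\E \norm{z - \cg(\wt K_1)}^2 \le 4r^2$, so each eigenvalue is at most $4r^2$; in fact one gets $\preceq r^2 I$ by a cleaner argument, but a crude constant is harmless given the $n^2$ slack). This already gives $\Cov(\wt K_1) \preceq O(n^2) I$, and absorbing the constant into the statement's $n^2$ (or tracking constants slightly more carefully) yields the claimed bound. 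The point at which $\wt H$ cuts, and which side $\wt K_1$ is, plays no role here — containment in the ball is all that matters.

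For the \emph{lower bound}: this is where the hypothesis $\norm{\wt x}_2 \le 0.1$ is essential, and it is the main obstacle. The inner ball $\sqrt{\tfrac{n+1}{n}} B_2 \supseteq$ is of no direct use because $\wt K_1$ could be a thin sliver if $\wt H$ barely clips $\wt K$ — but it cannot, because $\wt H$ passes through $\wt x$ which is within distance $0.1$ of the centroid $0$, which lies deep inside the inscribed ball $B_2 \subseteq \wt K$. Concretely, let $u$ be the unit normal of $\wt H$; since $B_2 \subseteq \wt K$ and $0$ is at distance $\le 0.1$ from $\wt H$, the half-ball $\{z \in B_2 : \langle z - \wt x, u\rangle \ge 0\}$ is contained in $\wt K_1$ and contains a Euclidean ball of radius, say, $1/4$ centered at the point $\tfrac{1}{2} u + \wt x'$ for an appropriate nearby point (any ball of radius $r_0 = \Theta(1)$ fitting inside the spherical cap of $B_2$ on the far side of $\wt H$ from $\wt x$). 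A convex body containing a ball of radius $r_0$ has covariance $\succeq c\, r_0^2 I$ for a universal constant $c$ (translate so the ball is centered at origin; then $\Cov \succeq \E_{\text{ball}}[zz^\top] = \frac{r_0^2}{n+2} I$, and the true centroid only improves this by the parallel-axis bound after recentering — more carefully, $\Cov(\wt K_1) \succeq \frac{r_0^2}{n+2} I$ directly since the uniform measure on $\wt K_1$ restricted to the inner ball contributes at least this, up to a mass factor). Tracking the mass factor and the $\frac{1}{n+2}$ gives exactly an $\Omega(1/n^2)$-type bound, matching $\frac{1}{5n^2} I$ after fixing constants.

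The delicate points to get right in the full write-up: (i) verifying that a ball of universal radius genuinely fits inside $\wt K_1$ — this needs $\norm{\wt x}_2 \le 0.1 < 1$ together with $B_2 \subseteq \wt K$, and one should argue via the spherical cap of $B_2$ on the appropriate side of $\wt H$; (ii) the passage from "contains a ball of radius $r_0$, has mass $\mu_0$ there" to a lower bound on $\Cov$, where one must be careful that the centroid of $\wt K_1$ is not pulled so far that the variance in some direction collapses — but since $\wt K_1 \subseteq \sqrt{n(n+1)} B_2$ the centroid is at distance $O(\sqrt n)$, and the standard second-moment-minus-mean-squared bookkeeping still leaves $\Omega(1/n^2)$; (iii) bookkeeping the constants so the final statement reads with the specific $1/(5n^2)$ and $n^2$ — this likely requires using $\Cov(\wt K_1) \preceq \frac{1}{n+2}\cdot\text{(diam)}^2 \cdot I$ type sharp bounds rather than crude ones, which is routine. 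I expect step (i) and the mass-factor tracking in (ii) to be the only genuinely nontrivial parts; everything else is bookkeeping with Theorem~\ref{thm:isotropic_rounding}.
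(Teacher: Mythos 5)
Your overall skeleton (pass to isotropic position, use the hypothesis $\norm{\wt x}_2\le 0.1$ to show $\wt K_1$ contains a ball of radius $\Theta(1)$, and use the outer ball $\sqrt{n(n+1)}B_2$) matches the paper's proof, but the step that converts ``$\wt K_1$ contains a ball of radius $r_0$'' into the lower bound $\Cov(\wt K_1)\succeq \Omega(r_0^2/n^2) I$ is genuinely broken as you argue it. You claim the uniform measure on $\wt K_1$ restricted to the inner ball contributes $\frac{r_0^2}{n+2}I$ ``up to a mass factor,'' and that tracking this mass factor still leaves $\Omega(1/n^2)$. It does not: for any direction $u$ and any center $c$, restricting to the ball only gives $\Var_u(\wt K_1)\ge \frac{\vol(B(r_0))}{\vol(\wt K_1)}\cdot\frac{r_0^2}{n+2}$, and the mass fraction $\vol(B(r_0))/\vol(\wt K_1)$ can be as small as $\bigl(r_0/\sqrt{n(n+1)}\bigr)^n$, i.e.\ exponentially small in $n$ (think of $\wt K_1$ a cone or long body of diameter $\Theta(n)$ containing a unit ball near one end). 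So this route cannot recover a $1/\poly(n)$ bound, and your headline claim ``a convex body containing a ball of radius $r_0$ has covariance $\succeq c\,r_0^2 I$ for a universal $c$'' is false even with the correct $1/n^2$ scaling attached, if proved this way. The missing idea is a \emph{second} application of Theorem~\ref{thm:isotropic_rounding}, this time to $\wt K_1$ itself: writing $E_1=\{y: y^\top \Cov(\wt K_1)^{-1}y\le 1\}$, one has $\wt K_1\subseteq \cg(\wt K_1)+\sqrt{n(n+1)}\,E_1$, and since $\wt K_1$ contains a translate of a ball of radius $r_0$, symmetry and convexity of $E_1$ give $E_1\supseteq \frac{r_0}{\sqrt{n(n+1)}}B_2$, i.e.\ $\Cov(\wt K_1)\succeq \frac{r_0^2}{n(n+1)}I$, which with $r_0=\sqrt{(n+1)/(5n)}$ yields exactly $\frac{1}{5n^2}I$. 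This is how the paper argues, and it is the step your write-up lacks.

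Two smaller issues in the same vein: your upper bound as stated gives $\Cov(\wt K_1)\preceq n(n+1)I$ (Popoviciu on an interval of length $2\sqrt{n(n+1)}$) or worse with the $4r^2$ version, which exceeds the claimed $n^2 I$; and the ``sharp bound'' $\Cov(\wt K_1)\preceq \frac{(\mathrm{diam})^2}{n+2}I$ you propose to fix the constants is false in general (a needle-like body has variance $\sim \mathrm{diam}^2/12$ along its axis). Both are repaired by the same second application of Theorem~\ref{thm:isotropic_rounding} to $\wt K_1$: the inner inclusion $\cg(\wt K_1)+\sqrt{(n+1)/n}\,E_1\subseteq \wt K_1\subseteq \sqrt{n(n+1)}B_2$ forces $E_1\subseteq nB_2$, hence $\Cov(\wt K_1)\preceq n^2 I$ on the nose.
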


\begin{proof}
Without loss of generality, we may assume that $K$ is in isotropic position, in which case the condition that $\norm{x - \cg(K)}_{\Cov(K)^{-1}} \leq 0.1$ becomes $\norm{x}_2 \leq 0.1$. 
Theorem~\ref{thm:isotropic_rounding} then gives
\begin{align*} 
\sqrt{\frac{n+1}{n}} \cdot B_2 \subseteq K \subseteq \sqrt{n (n+1)} \cdot B_2 . 
\end{align*} 

Let halfspace $H_1$ be the translation of halfspace $H$ such that $x$ lies on its boundary hyperplane $H_1'$. Note that $K \cap H_1 \subseteq K \cap H$.  
Let $x' := \Pi_{H_1'}(\cg(K))$ be the orthogonal projection of $\cg(K) = 0$ onto the hyperplane $H_1'$. Then, 
\begin{align*}
\|x'\|_2 \leq \| x- 0\|_2 \leq 0.1 .
\end{align*}
This shows that the hyperplane $H_1'$ is at Euclidean distance at most $0.1$ from $0$. 
It then follows that $\sqrt{\frac{n+1}{n}} B_2 \cap H_1$ contains a ball of radius at least 
\begin{align*}
\frac{1}{2} \cdot \left(\sqrt{\frac{n+1}{n}} - 0.1 \right) \geq  0.45 \sqrt{\frac{n+1}{n}} \geq \sqrt{\frac{n+1}{5n}} ,
\end{align*}
where the last inequality uses $\sqrt{5} \times 0.45 \geq 1$. 
Since we have $\sqrt{\frac{n+1}{n}} B_2 \cap H_1 \subseteq K \cap H_1 \subseteq K \cap H$, this implies that $K \cap H$ contains a ball of radius $\sqrt{\frac{n+1}{5n}}$, and is contained in a ball of radius $\sqrt{n(n+1)}$. 
Consider the ellipsoid $E_{K \cap H} = \{y: y^\top \Cov(K \cap H)^{-1} y \leq 1 \}$. 
Then Theorem~\ref{thm:isotropic_rounding} implies that
\begin{align*}
\cg(K \cap H) + \sqrt{\frac{n+1}{n}} \cdot E_{K \cap H} \subseteq K \cap H \subseteq \cg(K \cap H) + \sqrt{n(n+1)} \cdot E_{K \cap H} .
\end{align*}
We thus have $\frac{1}{\sqrt{5} n} \cdot B_2 \subseteq E_{K \cap H} \subseteq n \cdot B_2$, and the statement of the lemma follows immediately.
\end{proof}

We note that some of these convex geometry tools have previously been used, for example, to find the densest sub-lattice in arbitary norm~\cite{dm13}.

\subsection{Cutting Plane Methods}

Cutting plane methods optimize a convex function $f$ by maintaining a convex set $K$ that contains the minimizer of $f$, which gets refined iteratively using the separating hyperplanes returned by the separation oracle. 
One of the most classical cutting plane methods is the center of gravity method, discovered independently by Levin \cite{l65} and Newman \cite{n65}.

\begin{algorithm}[htp!]\caption{}\label{alg:CenterOfGravity}
\begin{algorithmic}[1]
\Procedure{\textsc{CenterOfGravity}}{$\SO, K$} 
\State Query $\SO$ at $\cg(K)$
\If{$\SO$ outputs ``YES''}
\State \textbf{Return} ``YES''
\Else
\State Let $c$ be the output of $\SO$
\State \textbf{Return} $K' := K \cap \{x: c^\top x \geq c^\top \cg(K)\}$
\EndIf
\EndProcedure
\end{algorithmic}
\end{algorithm}

\begin{theorem}[Center of gravity method \cite{l65,n65}] \label{thm:CG}
Given a separation oracle $\SO$ for a convex function $f$ defined on $\mathbb{R}^n$ with minimizers $K^*$, and a convex body $K \subseteq \mathbb{R}^n$ containing $K^*$. If $\cg(K)$ doesn't minimize $f$, then the convex body $K'$ returned by $\text{CenterOfGravity}(\SO, K)$ above contains $K^*$ and satisfies $\vol(K') \leq (1-1/e) \cdot \vol(K)$. 
\end{theorem}

The center of gravity method is not efficient as it involves computing the centroid of convex bodies. 
Using sampling techniques to estimate $\cg(K)$ and $\Cov(K)$, an efficient implementation of the center of gravity method was given in \cite{bv04}. 
We start with the definition of $\epsilon$-approximate centroid and covariance. 

\begin{definition}[$\epsilon$-approximate centroid and covariance]
Let $0 < \epsilon < 1$ be a parameter. 
Given a convex body $K \subseteq \mathbb{R}^n$, we call $x_K \in \mathbb{R}^n$ an $\epsilon$-approximate centroid of $K$ if $\|x_K - \cg(K)\|_{\Cov(K)^{-1}} \leq \epsilon$. We call PSD matrix $\Sigma_K \in \mathbb{R}^{n \times n}$ an $\epsilon$-approximate covariance matrix if $(1-\epsilon) \cdot \Cov(K) \preceq \Sigma_K \preceq (1 + \epsilon) \cdot \Cov(K)$. 
\end{definition}

Constructing $\epsilon$-approximate centroids and covariance matrices via sampling for well-rounded convex bodies appeared in the works of \cite{kls97,alptj10,sv13}. The formulation of the following theorem is from \cite[Lemma 2.5 and Theorem 2.7]{jllv21} together with the standard fact that the uniform distribution over a convex body is log-concave.

\begin{theorem}[Approximate centroid and covariance by sampling, \cite{kls97,alptj10,sv13}] 
\label{thm:sampling_polytope}
Let parameters $0< \epsilon < 1$ and $0 < \delta < 1/2$. 
Given a convex body $K \subseteq \mathbb{R}^n$ specified by $m$ constraints, a point $x \in K$ and a PSD matrix $A \in \mathbb{R}^{n \times n}$ such that the following sandwiching condition holds 
\begin{align} \label{eq:warm_start}
x + E(A) \subseteq K \subseteq x + 2^{\poly(n)} \cdot E(A) ,
\end{align}
then there is a randomized algorithm that uses $m \cdot \poly(n, 1/\epsilon, \log(1/\delta))$ arithmetic operations to compute, with probability at least $1-\delta$, an $\epsilon$-approximate centroid $x_K$ and an $\epsilon$-approximate covariance matrix $\Sigma_K$ of $K$. 
\end{theorem}

Since approximate centroid and covariance matrix of a convex body give a sandwiching condition as in \eqref{eq:warm_start}, \cite{bv04} obtained the following efficient implementation of the center of gravity method. 
The theorem below comes from directly using \Cref{thm:sampling_polytope} in the algorithmic framework of \cite{bv04}. 

\begin{theorem}[Approximate center of gravity method, \cite{bv04}] \label{thm:randomwalkcg}
Let parameters $0< \epsilon < 0.01$ and $0 < \delta < 1/2$. 
Given a separation oracle $\SO$ for a convex function $f$ defined on $\mathbb{R}^n$ with minimizers $K^*$, a polytope $K$ with $m$ constraints containing $K^*$, an $\epsilon$-approximate centroid $x_K \notin K^*$ and an $\epsilon$-approximate covariance matrix $\Sigma_K$ of $K$, there exists a randomized algorithm $\textsf{RandomWalkCG}(\SO, K, x_K, \Sigma_K, \epsilon, \delta)$ that makes one call to $\SO$ and an extra $m \cdot \poly(n, 1/\epsilon, \log(1/\delta))$ arithmetic operations to return a polytope $K'$, a point $x_{K'} \in K'$ and a PSD matrix $\Sigma_{K'}$ such that the following hold with probability at least $1-\delta$: 
\begin{itemize}
	\item [(a)] $K^* \subseteq K'$ and $K'$ is the intersection of $K$ with a constraint output by $\SO$ at $x_K$,
	\item [(b)] $\vol(K') \leq \frac{2}{3} \cdot \vol(K)$,
	\item [(c)] $x_{K'}$ is an $\epsilon$-approximate centroid of $K'$, and
	\item [(d)] $\Sigma_{K'}$ is an $\epsilon$-approximate covariance matrix of $K'$. 
\end{itemize}
\end{theorem}

\section{Technical Lemmas}
\label{sec:technical_lemmas}

In this section, we prove a few technical lemmas which are key to our result.

\subsection{Dimension Reduction that Preserves Low-Complexity Rational Points}
\label{subsec:reduce_dimension}

Recall from \Cref{defn:Sphi} that $S_\varphi^n$ is the set of rational vectors with LCM vertex complexity at most $\varphi \geq 0$.

\begin{lemma}[Dimension reduction that preserves low-complexity rational points] \label{lem:reduce_dimension}
Given an affine subspace $W = x_0 + W_0$, where $W_0$ is a linear subspace of $\mathbb{R}^n$ and $x_0 \in \mathbb{R}^n$ is a fixed point, 
and an ellipsoid  $E = E(x_0, A)$ that has full rank on $W$. 
Given a vector $v \in \Pi_{W_0}(\mathbb{Z}^n) \setminus \{0\}$ with $\norm{v}_{A^{-1}} < 1/2^{2 \varphi + 1}$, where $\varphi \geq 0$ is an integer, then there exists a hyperplane $P \nsupseteq W$ such that $E \cap S_{\varphi}^n \subseteq P \cap W$. 
In particular, let $z \in \mathbb{Z}^n$ be such that $v = \Pi_{W_0}(z)$, then $P$ can be taken as 
\begin{align*}
P = \{x: v^\top x = (v - z)^\top x_0 + \lceil z^\top x_0 \rfloor_\varphi \} . 
\end{align*} 
\end{lemma}

\begin{proof}
Clearly we have $E \cap S_\varphi^n \subseteq W$ since $E \subseteq W$. 
It therefore suffices to show that the hyperplane $P$ given in the lemma statement satisfies $P \nsupseteq W$ and $E \cap S_\varphi^n \subseteq P$. 

Since $v \in W_0 \setminus \{0\}$ and $W_0$ is a translation of $W$, we have $P \nsupseteq W$. If $E \cap S_\varphi^n = \emptyset$, then the lemma statement trivially holds. We may therefore assume $E \cap S_\varphi^n \neq \emptyset$ in the following. Then for any rational vectors $x_1,x_2 \in E \cap S_\varphi^n$, we have
\begin{align*}
|v^\top (x_1 - x_2)| 
& \leq \norm{v}_{A^{-1}} \cdot \norm{x_1 - x_2}_A \\
& < \frac{1}{2^{2 \varphi + 1}} \cdot (\norm{x_1 - x_0}_A + \norm{x_2 - x_0}_A) \leq \frac{1}{2^{2 \varphi}} .
\end{align*}
Since $x_1,x_2 \in W \cap S_\varphi^n$, we have $x_1 - x_2 \in W_0 \cap S_{2 \varphi}$.
As $v = \Pi_{W_0}(z)$ where $z \in \mathbb{Z}^n$, we have 
\begin{align*}
v^\top (x_1 - x_2) = z^\top (x_1 - x_2) \in \mathbb{Z}/q ,
\end{align*}
for some positive integer $q \leq 2^{2 \varphi}$. 
It then follows that $v^\top x_1 = v^\top x_2$.  
Finally, we note that for any rational vector $x_1 \in E \cap S_\varphi^n$, we have
\begin{align*}
|z^\top (x_1 - x_0)| = | v^\top (x_1 - x_0) | \leq \norm{v}_{A^{-1}} \cdot \norm{x_1 - x_0}_A < \frac{1}{2^{2 \varphi + 1}} . 
\end{align*}
Since $z^\top x_1 \in \mathbb{Z}/q'$ for some $q' \leq 2^{\varphi}$, we have $z^\top x_1 = \lceil z^\top x_0 \rfloor_\varphi$.
Therefore, we have 
\begin{align*}
v^\top x_1 = \lceil z^\top x_0 \rfloor_{\varphi} + (v - z)^\top x_1  = \lceil z^\top x_0 \rfloor_{\varphi} + (v - z)^\top x_0 ,
\end{align*}  
where the last equality is because $v - z \in W_0^\bot$ and $x_1 - x_0 \in W_0$. 
This finishes the proof of the lemma. 
\end{proof}

We remark here that the rounding $\lceil \cdot \rfloor_\varphi$ in the construction of the hyperplane $P$ can be efficiently computed using the continued fraction method (e.g. \cite[Corollary 6.3a]{s98}.

\subsection{High Dimensional Slicing Lemma}

\begin{lemma}[High dimensional slicing lemma] \label{lem:high_dim_slicing}
Let $K$ be a convex body and $L$ be a full-rank lattice in $\mathbb{R}^n$. Let $W$ be an $(n-k)$-dimensional linear subspace of $\mathbb{R}^n$ such that $\dim(L \cap W) = n-k$. Then we have
\begin{align*}
\frac{\vol(K \cap W)}{\det(L \cap W)} \leq \frac{\vol(K)}{\det(L)} \cdot \frac{k^{O(k)}}{\lambda_1(L^*, K)^k} ,
\end{align*} 
where $L^*$ is the dual lattice, and $\lambda_1(L^*,K)$ is the shortest non-zero vector in $L^*$ under the norm $\norm{ \cdot }_{\Cov(K)}$.  
\end{lemma}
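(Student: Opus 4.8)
The plan is to split the quantity $\vol(K\cap W)/\det(L\cap W)$ into its "volume part" and its "lattice part" and bound each separately, using the corresponding identities for $K$ and $L$. For the lattice part, I would invoke Fact~\ref{fact:lattice_projection}, which gives $\det(L) = \det(L\cap W)\cdot\det(\Pi_{W^\bot}(L))$, so that
\begin{align*}
\frac{1}{\det(L\cap W)} = \frac{\det(\Pi_{W^\bot}(L))}{\det(L)}.
\end{align*}
It then remains to lower-bound $\det(\Pi_{W^\bot}(L))$. By Fact~\ref{fact:lattice_duality}, the dual of $\Pi_{W^\bot}(L)$ is $L^*\cap W^\bot$, and by Minkowski's first theorem in the $\Cov(K)$-norm (Theorem~\ref{thm:minkowski_shortest_vector}, applied on the $k$-dimensional space $W^\bot$ with the PSD matrix $\Cov(K)^{-1}$ restricted there), the shortest nonzero vector of $L^*\cap W^\bot$ has $\Cov(K)$-norm at most $\sqrt{k}\cdot\det(\Cov(K)^{-1/2}\!\restriction_{W^\bot})^{1/k}\cdot\det(L^*\cap W^\bot)^{1/k}$; since this is $\geq\lambda_1(L^*,K)$, I can solve for $\det(L^*\cap W^\bot)=1/\det(\Pi_{W^\bot}(L))$ and get a bound of the shape $\det(\Pi_{W^\bot}(L))\gtrsim \lambda_1(L^*,K)^k\cdot k^{-k/2}\cdot\det(\Cov(K)^{1/2}\!\restriction_{W^\bot})$. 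This is where the $\lambda_1(L^*,K)^k$ and a factor $k^{O(k)}$ in the denominator come from.

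For the volume part, I would pass to the isotropic position of $K$ (an affine transformation scales both $\vol(K\cap W)/\vol(K)$ and the remaining determinant factors consistently, so WLOG $\cg(K)=0$, $\Cov(K)=I$). Brunn's principle (Theorem~\ref{thm:brunn}) says the slice function $x\mapsto\vol(K\cap(W+x))$ is log-concave on $W^\bot\cong\mathbb{R}^k$; normalizing it to a density and applying Theorem~\ref{thm:high_dim_logconcave} (a value of an isotropic log-concave density in dimension $k$ is at most $2^{8k}k^{k/2}$) bounds the value at the centroid, hence $\vol(K\cap W)/\vol(K)$, by $k^{O(k)}$ times the inverse of $\big(\int_{W^\bot}\vol(K\cap(W+x))\,dx\big)^{-1}$-type normalization constants — in the isotropic normalization these constants are $O(1)^k$, so $\vol(K\cap W)\leq k^{O(k)}\cdot\vol(K)$. (One must be slightly careful: the centroid of the slice function need not be the origin, but Theorem~\ref{thm:high_dim_logconcave} bounds the density everywhere after re-isotropizing the slice function, which only costs more $k^{O(k)}$ factors; this is the routine part.) Finally, since in isotropic position $\Cov(K)=I$, the factor $\det(\Cov(K)^{1/2}\!\restriction_{W^\bot})$ is $1$, so it drops out, and reassembling gives exactly
\begin{align*}
\frac{\vol(K\cap W)}{\det(L\cap W)} \leq \vol(K)\cdot k^{O(k)}\cdot\frac{\det(\Pi_{W^\bot}(L))}{\det(L)}\cdot\frac{1}{\vol(K)}\cdot\frac{\vol(K)}{1} \cdot \frac{1}{\lambda_1(L^*,K)^k}\cdot k^{O(k)} = \frac{\vol(K)}{\det(L)}\cdot\frac{k^{O(k)}}{\lambda_1(L^*,K)^k},
\end{align*}
after tracking how the affine transformation to isotropic position cancels between the $\vol$ and $\det$ factors.

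The main obstacle I anticipate is the bookkeeping of how the affine (non-orthogonal) change of coordinates to isotropic position interacts with the three scale-sensitive quantities simultaneously — $\vol(K\cap W)$, $\det(L\cap W)$, and $\lambda_1(L^*,K)$ — since $\lambda_1(L^*,K)$ is measured in the $\Cov(K)$-norm and is therefore affine-invariant under that very transformation, whereas $\vol(K\cap W)$ and $\det(L\cap W)$ transform by Jacobian factors that must be shown to cancel against $\det(L)$ and $\vol(K)$ on the right-hand side. Getting these cancellations exactly right (rather than up to an uncontrolled power of $n$) is the delicate point; the convex-geometric and Minkowski estimates themselves are essentially black-box applications of the cited theorems. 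A secondary subtlety is ensuring $\Pi_{W^\bot}(L)$ is genuinely full-rank in $W^\bot$ so that Fact~\ref{fact:lattice_projection} and Fact~\ref{fact:lattice_duality} apply — this follows from $\dim(L\cap W)=n-k$ together with $L$ being full-rank, but should be stated.
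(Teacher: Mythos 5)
Your proposal is correct and follows essentially the same route as the paper: bound $\vol(K\cap W)/\vol(K)$ by $k^{O(k)}$ via Brunn's principle and the pointwise bound on isotropic log-concave densities, and bound $\det(L)/\det(L\cap W)$ via the lattice projection/duality facts together with Minkowski's theorem applied to $L^*\cap W^\bot$. The bookkeeping you flag as the main obstacle is resolved in one line by observing that $\vol(K\cap W)/\det(L\cap W)$, $\vol(K)/\det(L)$ and $\lambda_1(L^*,K)$ are all invariant under a simultaneous linear rescaling of $K$ and $L$ (so one may assume $\Cov(K)=I$ outright), and your re-isotropization hedge is unnecessary because the marginal of an isotropic body onto $W^\bot$ is itself isotropic, so the density bound applies at the (possibly off-center) point corresponding to $W$ with no extra factors.
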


\begin{proof}
Note that $\vol(K \cap W) / \det(L \cap W)$, $\vol(K)/\det(L)$, and $\lambda_1(L^*, K)$ are preserved when applying the same linear transformation to $K$ and $L$ simultaneously. 
We can therefore rescale $K$ and $L$ such that $\Cov(K) = I$. 
We may further assume that $K \cap W \neq \emptyset$ as otherwise $\vol(K \cap W) = 0$ and the statement trivially holds.

We first upper bound $\vol(K \cap W)$ in terms of $\vol(K)$. 
To this end, we apply a translation on $K$ to obtain $K_0$ such that $\cg(K_0) = 0$, i.e. $K_0$ is in isotropic position, and it suffices to upper bound the cross-sectional volume $\vol(K_0 \cap (W + x))$ for an arbitrary $x \in W^\bot$. 
By identifying $W^\bot$ with $\mathbb{R}^k$, we note that the function $f(x)$ defined as $f(x) := \vol(K_0 \cap (W + x)) / \vol(K_0)$
is a log-concave density function on $\mathbb{R}^k$ by Brunn's principle (Theorem~\ref{thm:brunn}).
Furthermore, $f(x)$ is isotropic since $K_0$ is in isotropic position. 
It thus follows from Theorem~\ref{thm:high_dim_logconcave} that $f(x) \leq k^{O(k)}$, for any $x \in \mathbb{R}^k$. 
Note that $K = K_0 + \cg(K)$, we obtain from taking $x = -\cg(K)$ that
\begin{align} \label{eq:vol_ratio_bound}
\frac{\vol(K \cap W)}{\vol(K)} \leq k^{O(k)}.
\end{align}

We next upper bound $\det(L)$ in terms of $\det(L \cap W)$. 
Note that
\begin{align} \label{eq:det_ratio_bound_1}
\det(L) = \det(L \cap W) \cdot \det(\Pi_{W^\bot} (L)) = \frac{\det(L \cap W)}{\det(L^* \cap W^\bot)} ,
\end{align}
where the first equality follows from Fact~\ref{fact:lattice_projection}, and the second equality is due to Fact~\ref{fact:lattice_duality}. 
By Minkowski's first theorem (Theorem~\ref{thm:minkowski_shortest_vector}), we have
\begin{align*}
\lambda_1(L^*) \leq \lambda_1(L^* \cap W^\bot) \leq \sqrt{k} \cdot (\det(L^* \cap W^\bot))^{1/k}.
\end{align*}
Combine this with the earlier equation \eqref{eq:det_ratio_bound_1} gives 
\begin{align} \label{eq:lattice_ratio_bound}
\det(L) \leq \frac{\det(L \cap W) \cdot \sqrt{k}^k}{\lambda_1(L^*)^k}
\end{align}

It then follows from \eqref{eq:vol_ratio_bound} and \eqref{eq:lattice_ratio_bound} that  
\begin{align*}
\frac{\vol(K \cap W)}{\vol(K)} \cdot \frac{\det(L)}{\det(L \cap W)} \leq \frac{k^{O(k)}}{\lambda_1(L^*)^k} .
\end{align*}
This finishes the proof of the lemma.
\end{proof}

\section{Meta Algorithm}
\label{sec:meta_algo}

In this section, we present a simple meta algorithm (\Cref{alg:meta_algo}) that achieves the oracle complexity in \Cref{thm:polyhedra}. While this meta algorithm requires computing the centroids and covariance matrices of polytopes and is therefore not efficient, its oracle complexity analysis contains most of the key insights of this paper. 
We give an efficient (but more complicated) implementation of this meta algorithm and prove \Cref{thm:polyhedra} in \Cref{sec:implementation}. 

\begin{theorem}[Oracle Complexity in \Cref{thm:polyhedra}] \label{thm:meta_algo}
Given a separation oracle $\SO$ for a convex function $f$ defined on $\mathbb{R}^n$, and a $\gamma$-approximation algorithm $\textsc{ApproxSVP}$ for the shortest vector problem. 
If the set of minimizers $K^*$ of $f$ is a rational polyhedron contained in a box of radius $R$ and has LCM vertex complexity at most $\varphi \geq 0$, then there is a randomized algorithm that with high probability finds a vertex of $K^*$ using $O(n(\varphi + \log(\gamma n R)))$ calls to $\SO$.
\end{theorem}

\subsection{The Meta Algorithm}
\label{subsec:the_meta_algo}
By the argument in the beginning of \Cref{subsec:proof_overview}, we may assume without loss of generality that $f$ has a unique minimizer $x^* \in S_\varphi^n$. We therefore describe our algorithm under this assumption. 

Our meta algorithm maintains an affine subspace $W$, a polytope $K \subseteq W$ containing the rational minimizer $x^*$ of $f$, and a lattice $\Lambda$. 
It also maintains the centroid $x_K$ and covariance matrix $\Sigma_K$ of the polytope $K$. 
In the beginning, the affine subspace $W = \mathbb{R}^n$, polytope $K = B_\infty(R)$ and lattice $\Lambda = \mathbb{Z}^n$. 
In each iteration of the algorithm (i.e. each while loop), the algorithm uses the $\gamma$-approximation algorithm \textsc{ApproxSVP} to find a short non-zero vector $v \in \Lambda$ under $\Sigma_K$-norm. 
If the vector $v$ satisfies $\norm{v}_{\Sigma_K} \geq \frac{1}{10n2^{2 \varphi}}$,
then the algorithm runs the center of gravity method (\Cref{thm:CG}) for one more step, and updates $x_K$ and $\Sigma_K$ to be the centroid and covariance matrix of the new polytope $K$. 
We remark that the criterion for performing the cutting plane step comes from the convex geometry fact that $K \subseteq x_K + 2n \cdot E(\Sigma_K^{-1})$ (Theorem~\ref{thm:isotropic_rounding}).

If, on the other hand, that $\norm{v}_{\Sigma_K} < \frac{1}{10n2^{2 \varphi}}$, then the algorithm uses Lemma~\ref{lem:reduce_dimension} to find a hyperplane $P$ that contains $K \cap S_\varphi^n$, where we recall from \Cref{defn:Sphi} that $S_\varphi^n$ is the set of all rational vectors in $\mathbb{R}^n$ with LCM vertex complexity at most $\varphi$. 
Specifically, the hyperplane $P = \{x: v^\top x = (v - z)^\top x_K + \lceil z^\top x_K \rfloor_\varphi \}$ for some integral vector $z \in \mathbb{Z}^n$ such that $v = \Pi_{W_0}(z)$ and $W_0 = -x_K + W$ is the translation of $W$ that passes through the origin.
One may find such a vector $z \in \mathbb{Z}^n$ efficiently by solving the closest vector problem $\min_{z \in \mathbb{Z}^n} \norm{z - v}_{P_{W_0}}$, where $P_{W_0}$ is the projection matrix onto the subspace $W_0$. 
As mentioned earlier, the rounding $\lceil \cdot \rfloor_\varphi$ can also be performed efficiently using the continued fraction method. 
After constructing the hyperplane $P$, the algorithm then recurses on the lower-dimensional affine subspace $W \cap P$, updates $K$ to be $K \cap P$, and updates $x_K$ and $\Sigma_K$ to be the centroid and covariance matrix of the new polytope $K \cap P$. 
The algorithm obtains a new lattice with rank reduced by one by projecting the current lattice $\Lambda$ onto $P_0$, a translation of $P$ that passes through the origin.

The above procedure stops when $\dim(W) = 0$, in which case $K$ contains a unique rational point $x^*$ which will be the output of the algorithm. 
Note that when $\dim(W) = 1$, the algorithm reduces to a binary search on the segment $K \subseteq W$. 
A formal description of the algorithm is given in \Cref{alg:meta_algo}.

We remark that \Cref{alg:meta_algo} is not efficient since it requires the computation of the centroid and covariance matrix in Line \ref{line:cg_1} and \ref{line:cg_2}. Line \ref{line:cg_1} can easily be made efficient using the approximate center of gravity method as in \Cref{thm:randomwalkcg}. However, it is not clear how to efficiently implement Line \ref{line:cg_2} since we do not know an ellipsoid satisfying condition \eqref{eq:warm_start} in \Cref{thm:sampling_polytope}, and thus approximate centroid and covariance matrix might not be efficiently computable by sampling. We address this computational issue in the next section.

\begin{algorithm}[H]\caption{}\label{alg:meta_algo}
\begin{algorithmic}[1]
\Procedure{\textsc{MetaALG}}{$\SO,R, \varphi$} 
\State Affine subspace $W \leftarrow \mathbb{R}^n$, polytope $K \leftarrow B_\infty(R)$, lattice $\Lambda \leftarrow \mathbb{Z}^n$
\State Centroid $x_K \leftarrow \cg(K)$, covariance matrix $\Sigma_K \leftarrow \Cov(K)$ 
\Comment{$x_K + E(\Sigma_K^{-1}) / 2 \subseteq K \subseteq x_K + 2n \cdot E(\Sigma_K^{-1})$}
\While{$\dim(W) > 0$} 
	\State $v \leftarrow \textsc{ApproxSVP}(\Lambda, \Sigma_K)$ 
   \Comment{$v \in \Lambda \setminus \{0\}$}
	\If{$\norm{v}_{\Sigma_K} \geq \frac{1}{10n 2^{2 \varphi}}$}
		\State $K \leftarrow \textsc{CenterOfGravity}(\SO, K)$ \label{line:CG}
		\State $x_K \leftarrow \cg(K)$, $\Sigma_K \leftarrow \Cov(K)$ \label{line:cg_1}
	\Else
	\State Find $z \in \mathbb{Z}^n$ such that $v = \Pi_{W_0}(z)$ \label{line:dim_reduce_start_improved} 
	\Comment{Subspace $W_0 = -x_K + W$} 
	\State Construct $P \leftarrow \{y: v^\top y = (v - z)^\top x_K + \lceil z^\top x_K \rfloor_\varphi \}$ 
	\label{line:reduce_dim_begin}
	\State $W \leftarrow W \cap P$, $K \leftarrow K \cap P$ 
	\Comment{Dimension reduction}
	\State $x_K \leftarrow \cg(K)$, $\Sigma_K \leftarrow \Cov(K)$ \label{line:cg_2}
	\State Construct hyperplane $P_0 \leftarrow \{y : v^\top y = 0\}$
	\State $\Lambda \leftarrow \Pi_{P_0}(\Lambda)$ \label{line:reduce_dim_end} \Comment{Lattice projection}
	\EndIf
\EndWhile
\State \textbf{Return} unique point $x^* \in K$
\EndProcedure
\end{algorithmic}
\end{algorithm}

\subsection{Oracle Complexity Analysis}

We start by proving the correctness of \Cref{alg:meta_algo}.

\begin{lemma}[Correctness of \textsc{MetaALG}] \label{lem:correctness_meta_algo}
Assuming the conditions in \Cref{thm:meta_algo} and that $f$ has a unique minimizer $x^* \in S_\varphi^n$, \Cref{alg:meta_algo} finds $x^*$. 
\end{lemma}

\begin{proof}
Note that in the beginning of each iteration, we have  
$K \subseteq W$ and $\Lambda \subseteq W_0$, where $W_0$ is the translation of $W$ that passes through the origin. 
We first argue that the lattice $\Lambda$ is in fact the orthogonal projection of $\mathbb{Z}^n$ onto the subspace $W_0$, i.e. $\Lambda = \Pi_{W_0}(\mathbb{Z}^n)$. 
This is required for Lemma~\ref{lem:reduce_dimension} to be applicable. 
Clearly $\Lambda = \Pi_{W_0}(Z)$ holds in the beginning of the algorithm since $\Lambda = \mathbb{Z}^n$ and $W = \mathbb{R}^n$.
Notice that the \textsc{CenterOfGravity} procedure in Line~\ref{line:CG} keeps $\Lambda$ and $W$ the same. 
Each time we reduce the dimension in Line~\ref{line:reduce_dim_begin}-\ref{line:reduce_dim_end}, we have
\begin{align*}
\Pi_{W_0 \cap P_0}(\mathbb{Z}^n) = \Pi_{W_0 \cap P_0}(\Pi_{W_0}(\mathbb{Z}^n)) = \Pi_{W_0 \cap P_0}(\Lambda) ,
\end{align*}
where the first equality follows because $W_0 \cap P_0$ is a subspace of $W_0$.
Since $\Pi_{P_0}(\Lambda) = \Pi_{W_0 \cap P_0}(\Lambda)$ as $v \in W_0$, this shows that the invariant $\Lambda = \Pi_{W_0}(\mathbb{Z}^n)$ holds throughout the algorithm.

We now prove that Algorithm~\ref{alg:meta_algo} finds the unique minimizer $x^* \in S_\varphi^n$.  
Note that in the beginning of the algorithm, we have $x^* \in K$. 
Since \textsc{CenterOfGravity} in Line~\ref{line:CG} always preserves $x^* \in K$, we only need to prove that dimension reduction in Line~\ref{line:reduce_dim_begin}-\ref{line:reduce_dim_end} preserves $x^* \in K$.
In the following, we show the stronger statement that each dimension reduction iteration in Line~\ref{line:reduce_dim_begin}-\ref{line:reduce_dim_end} preserves all rational points in $K \cap S_\varphi^n$.

Since \Cref{alg:meta_algo} maintains $x_K = \cg(K)$ and $\Sigma_K = \Cov(K)$ in every iteration,
an immediate application of \Cref{thm:isotropic_rounding} gives the following sandwiching condition: 
\begin{align} \label{eq:sandwiching_conditon}
x_K + E(\Sigma_K^{-1})/2 \subseteq K \subseteq x_K + 2n \cdot E(\Sigma_K^{-1}).
\end{align}

Now we proceed to show that each dimension reduction iteration preserves all rational points in $K \cap S_\varphi^n$. 
By the RHS of \eqref{eq:sandwiching_conditon}, we have $K \cap S_\varphi^n \subseteq (x_K + 2n \cdot E(\Sigma_K^{-1})) \cap S_\varphi^n$. 
Since $\norm{v}_{\Sigma_K} < \frac{1}{10n2^{2 \varphi}}$ is satisfied in a dimension reduction iteration, Lemma~\ref{lem:reduce_dimension} shows that all rational points in $(x_K + 2n \cdot E(\Sigma_K^{-1})) \cap S_\varphi^n$ lie on the hyperplane given by $P = \{y: v^\top y = (v - z)^\top x_K + \lceil z^\top x_K \rfloor_\varphi \}$. 
Thus we have $K \cap S_\varphi^n \subseteq K \cap P$ and this finishes the proof of the lemma. 
\end{proof}

Next, we prove the oracle complexity upper bound of \Cref{alg:meta_algo} in \Cref{thm:meta_algo}. 

\begin{lemma}[Oracle complexity of \textsc{MetaALG}] \label{lem:oracle_complexity_meta_algo}
Assuming the conditions in 
\Cref{thm:meta_algo} and that $f$ has a unique minimizer $x^* \in S_\varphi^n$, \Cref{alg:meta_algo} makes at most $O(n (\varphi + \log(\gamma n R)))$ calls to $\SO$. 
\end{lemma}

\begin{proof}
We note that the oracle is only called when \textsc{CenterOfGravity} is invoked in Line~\ref{line:CG}, and each run of \textsc{CenterOfGravity} makes one call to $\SO$ according to \Cref{thm:CG}. 
To upper bound the total number of runs of \textsc{CenterOfGravity}, we consider the potential function 
\begin{align*}
\Phi = \log(\vol(K) \cdot \det(\Lambda)).
\end{align*}
In the beginning, $\Phi = \log(\vol(B_\infty(R)) \cdot \det(I)) = n \log (R)$. 
Each time \textsc{CenterOfGravity} is called in Line~\ref{line:CG}, we have from \Cref{thm:CG} that the volume of $K$ decreases by at least a constant factor, so the potential function decreases by at least $\Omega(1)$ additively.

To analyze the change in the potential function after dimension reduction, we consider a maximal sequence of consecutive dimension reduction iterations $t_0+1, \cdots, t_0 + k$, i.e. \textsc{CenterOfGravity} is invoked in iteration $t_0$ and $t_0 + k+1$, while every iteration in $t_0 + 1,\cdots, t_0 + k$ decreases the dimension by one. 
We shall use superscript $(i)$ to denote the corresponding notations in the beginning of iteration $t_0 + i$, for any integer $i \geq 0$. 
In particular, in the beginning of iteration $t_0 + 1$, we have a convex body $K^{(1)} \subseteq K^{(0)} \subseteq W^{(0)} = W^{(1)}$, and after the sequence of dimension reduction iterations, we reach a convex body $K^{(k + 1)} = K^{(1)} \cap W^{(k + 1)} \subseteq K^{(0)} \cap W^{(k+1)}$. 
The lattice changes from $\Lambda^{(0)} = \Lambda^{(1)} \subseteq W_0^{(1)}$ to $\Lambda^{(k + 1)} = \Pi_{W_0^{(k + 1)}}(\Lambda^{(1)}) = \Pi_{W_0^{(k + 1)}}(\Lambda^{(0)})$, where we recall that subspaces $W_0^{(i)}$ are translations of the affine subspaces $W^{(i)}$ that pass through the origin. 
Note that the potential at the beginning of this maximal sequence of dimension reduction iterations is 
\begin{align*}
e^{\Phi^{(0)}} = \vol(K^{(0)}) \cdot \det(\Lambda^{(0)}) = \frac{\vol(K^{(0)})}{\det( (\Lambda^{(0)})^*)} .
\end{align*}
The potential after this sequence of dimension reduction iterations is
\begin{align*}
e^{\Phi^{(k+1)} } & = \vol(K^{(k+1)}) \cdot \det(\Lambda^{(k+1)}) 
 = \vol(K^{(1)} \cap W^{(k+1)}) \cdot \det(\Pi_{W_0^{(k+1)}}(\Lambda^{(0)})) \\
& = \frac{\vol(K^{(1)} \cap W^{(k+1)})}{\det( (\Pi_{W_0^{(k+1)}}(\Lambda^{(0)}))^* )} = \frac{\vol(K^{(1)} \cap W^{(k+1)})}{\det( (\Lambda^{(0)})^* \cap W_0^{(k+1)} )} \leq \frac{\vol(K^{(0)} \cap W^{(k+1)})}{\det( (\Lambda^{(0)})^* \cap W_0^{(k+1)} )},
\end{align*}
where the last equality follows from the duality $(\Pi_{W_0^{(k+1)}}(\Lambda^{(0)}))^* = (\Lambda^{(0)})^* \cap W_0^{(k+1)}$ in Fact~\ref{fact:lattice_duality}.
Since $W^{(k+1)}$ is a translation of the subspace $W_0^{(k+1)}$, we can apply \Cref{lem:high_dim_slicing} by taking $L = (\Lambda^{(0)})^*$ to obtain

\begin{align} \label{eq:pot_change_dim_reduce_1}
e^{\Phi^{(k+1)} } \leq e^{\Phi^{(0)}} \cdot \frac{k^{O(k)}}{\lambda_1(\Lambda^{(0)}, K^{(0)})^k},
\end{align} 
where $\lambda_1(\Lambda^{(0)}, K^{(0)})$ is the shortest non-zero vector in $\Lambda^{(0)}$ under the norm $\norm{ \cdot}_{\Cov(K^{(0)})}$.
As \textsc{CenterOfGravity} is invoked in iteration $t_0$, we have $\norm{v^{(0)}}_{\Sigma_K^{(0)}} \geq \frac{1}{10n 2^{2 \varphi}}$ for the output vector $v^{(0)} \in \Lambda^{(0)} \setminus \{0\}$. 
Since the \textsc{ApproxSVP} procedure is $\gamma$-approximation and that $\Sigma_K^{(0)} = \Cov(K^{(0)})$, this implies that $\lambda_1(\Lambda^{(0)}, K^{(0)}) \geq \frac{\Omega(1)}{\gamma n 2^{2 \varphi}}$. It then follows that
\begin{align*}
e^{\Phi^{(k+1)} } \leq e^{\Phi^{(0)}} \cdot (\gamma n 2^{\varphi})^{O(k)}.
\end{align*}
This shows that after a sequence of $k$ dimension reduction iterations, the potential increases additively by at most $O(k \log(\gamma n 2^{\varphi}))$. 
As there are at most $n$ dimension reduction iterations, the total amount of potential increase due to dimension reduction iterations is thus at most $O(n \log(\gamma n 2^{\varphi}))$.

Finally we note that whenever the potential becomes smaller than $-10 n \log(20 n \gamma 2^{2 \varphi})$, Minkowski's first theorem (Theorem~\ref{thm:minkowski_shortest_vector}) shows the existence of a non-zero vector $v \in \Lambda$ with $\norm{v}_{\Sigma_K} < \frac{1}{20n \gamma 2^{2 \varphi}}$. 
This implies that the $\gamma$-approximation algorithm \textsc{ApproxSVP} for the shortest vector problem will find a non-zero vector $v' \in \Lambda$ that satisfies $\norm{v'}_{\Sigma_K} < \frac{1}{20n 2^{2 \varphi}}$, and thus such an iteration will not invoke $\textsc{CenterOfGravity}$. 
Therefore, \Cref{alg:meta_algo} runs $\textsc{CenterOfGravity}$ at most $O(n \log(\gamma n 2^{\varphi}) + n \log(R)) = O(n(\varphi + \log(\gamma n R)))$ times. 
Since each run of $\textsc{CenterOfGravity}$ makes one call to $\SO$, the total number of calls to $\SO$ made by \Cref{alg:meta_algo} is thus $O(n(\varphi + \log(\gamma n R)))$. 
This finishes the proof of the lemma. 
\end{proof}

\begin{proof}[Proof of \Cref{thm:meta_algo}]
By the argument in the beginning of \Cref{subsec:proof_overview}, we may assume without loss of generality that $f$ has a unique minimizer $x^* \in S_\varphi^n$.
The correctness of \Cref{alg:meta_algo} is given in \Cref{lem:correctness_meta_algo}, and its oracle complexity is upper bounded in \Cref{lem:oracle_complexity_meta_algo}. 
These finish the proof of the theorem. 
\end{proof}

\section{Efficient Implementation of the Meta Algorithm}

\label{sec:implementation}

In this section, we give an efficient implementation of \Cref{alg:meta_algo} from the previous section and prove \Cref{thm:polyhedra} which we restate below for convenience. 

\thmpolyhedra*

\subsection{The Efficient Implementation}

By the argument in the beginning of \Cref{subsec:proof_overview}, we may assume without loss of generality that $f$ has a unique minimizer $x^* \in S_\varphi^n$. For simplicity, we present our algorithm under this assumption. 


As mentioned in the last paragraph of \Cref{subsec:the_meta_algo}, we can efficiently implement Line \ref{line:cg_1} of \Cref{alg:meta_algo} by using the approximate center of gravity method in \Cref{thm:randomwalkcg}. We now address the issue of efficiently implementing Line \ref{line:cg_2} of \Cref{alg:meta_algo} in the following. 


To obtain an approximate centroid and covariance matrix of the polytope $K$ after dimension reduction, our efficient algorithm maintains two polytopes $K_{\SO} \subseteq K_{\free}$. The polytope $K_{\SO}$ plays the same role as $K$ in \Cref{alg:meta_algo}, and is the polytope formed by the separating hyperplanes from $\SO$. 
And $K_{\free}$ is a simple polytope for which we always know an approximate centroid $x_K$ and covariance matrix $\Sigma_K$. 
Our algorithm explicitly maintains the lists of constraints for the polytopes $K_{\SO}$ and $K_{\free}$ to efficiently perform computations on them.  
In particular, our algorithm can efficiently certify\footnote{In general, our algorithm might not be able to efficiently verify that the geometric objects $K_{\free}$ being the same as $K_{\SO}$. So whenever we say $K_{\free} = K_{\SO}$, we always mean it in the sense that it can be efficiently certified by checking that all constraints for $K_{\SO}$ appear in the list of constraints for $K_{\free}$.} that $K_{\free} = K_{\SO}$ when all the constraints for $K_{\SO}$ appear in the list of constraints for $K_{\free}$, since it is always maintained that $K_{\SO} \subseteq K_{\free}$.

In the beginning of the algorithm, $K_{\free} = K_{\SO}$ and we run $\textsc{RandomWalkCG}$ for both polytopes at the same time. 
When dimension reduction happens in Line \ref{line:dim_reduce_start_implementation}-\ref{line:dim_reduce_end_implementation}, $K_{\SO}$ is updated to be $K^{\new}_{\SO} = K_{\SO} \cap P$ and we no longer have approximations to $\cg(K^{\new}_{\SO})$ and $\Cov(K^{\new}_{\SO})$. 
To bypass this difficulty, our strategy is to update $K_{\free}$ to be a simple polytope $K^{\new}_{\free}$ containing $K^{\new}_{\SO}$ for which we know $\cg(K^{\new}_{\free})$ and $\Cov(K^{\new}_{\free})$, and ``learn'' $\cg(K^{\new}_{\SO})$ and $\Cov(K^{\new}_{\SO})$ by shrinking $K^{\new}_{\free}$ via $\textsc{RandomWalkCG}$ until it coincides with $K^{\new}_{\SO}$. 
Whenever $K^{\new}_{\free} = K^{\new}_{\SO}$ happens again (in the aforementioned sense that the constraints for $K^{\new}_{\SO}$ all appear in the list of constraints $K^{\new}_{\free}$), we have successfully learned an approximate centroid and covariance matrix of $K^{\new}_{\SO}$, and can continue to shrink $K^{\new}_{\SO}$ using $\textsc{RandomWalkCG}$ as before. 

Now we specify our choice of $K^{\new}_{\free}$ in the strategy above. 
Note that $K^\new_{\SO} \subseteq P \cap (x_K + 2n \cdot E(\Sigma_K^{-1}))$. 
Denoting the ellipsoid 
$P \cap (x_K + 2n \cdot E(\Sigma_K^{-1})) = E(w, A)$,
we can simply choose $K^{\new}_{\free}$ to be the smallest hyperrectangle containing $E(w, A)$, i.e. $K^{\new}_{\free} = w + A^{-1/2} B_\infty$, for which it is easy to compute an exact centroid and covariance matrix. 

Such choice of $K^{\new}_{\free}$ blows up the volume of the outer ellipsoid $P \cap (x_K + 2n \cdot E(\Sigma_K^{-1}))$ by a factor of $n^{O(n)}$, and thus shrinking $K^{\new}_{\free}$ seems to require much more $\SO$ calls. 
The crucial observation here is that when we shrink the volume of $K^{\new}_{\free}$, we do not need to make calls to $\SO$ since we already know the polytope $K^{\new}_{\SO} \subseteq K^{\new}_{\free}$. Instead, we simulate the separation oracle using the smaller polytope $K^{\new}_{\SO}$ via the procedure \textsc{FreeCG} (see \Cref{alg:freeCG}) until we have $K^{\new}_{\free} = K^{\new}_{\SO}$ again, at which point we regain approximations to $\cg(K^{\new}_{\SO})$ and $\Cov(K^{\new}_{\SO})$. 
If we are ever able to find a hyperplane $P^{\new}$ containing $K^{\new}_{\free} \cap S_\varphi^n$ even before reaching the point $K^{\new}_{\free} = K^{\new}_{\SO}$, we can further reduce the dimension. 
A formal description of the efficient implementation is given in \Cref{alg:implementation}.

\begin{algorithm}[htp!]\caption{}\label{alg:implementation}
\begin{algorithmic}[1]
\Procedure{\textsc{Main}}{$\SO,R, \varphi$} 
\State Affine subspace $W \leftarrow \mathbb{R}^n$, lattice $\Lambda \leftarrow \mathbb{Z}^n$
\State Polytopes $(K_{\free}, K_{\SO}) \leftarrow (B_\infty(R), B_\infty(R))$ \Comment{Maintain constraints explicitly for $K_{\free}$ and $K_{\SO}$}
\State $x_K \leftarrow \cg(K_{\free})$ and $\Sigma_K \leftarrow \Cov(K_{\free})$ 
\Comment{$x_K + E(\Sigma_K^{-1}) / 2 \subseteq K_{\free} \subseteq x_K + 2n \cdot E(\Sigma_K^{-1})$}
\State $\epsilon \leftarrow 0.01$, $\delta \leftarrow 1/\poly(n, \varphi, \log(\gamma R))$ \Comment{Parameters in Theorem~\ref{thm:randomwalkcg}}
\While{$\dim(W) > 0$} 
	\State $v \leftarrow \textsc{ApproxSVP}(\Lambda, \Sigma_K)$ 
   \Comment{$v \in \Lambda \setminus \{0\}$}
	\If{$\norm{v}_{\Sigma_K} \geq \frac{1}{10n2^{2 \varphi}}$}
		\If{$K_{\free} = K_{\SO}$} \Comment{List of constraints for $K_{\free}$ include that of $K_{\SO}$} 
		\State $(K', x_{K'}, \Sigma_{K'}) \leftarrow \textsc{RandomWalkCG}(\SO, K_{\free}, x_K, \Sigma_K, \epsilon, \delta)$ as in Theorem~\ref{thm:randomwalkcg} \label{line:randomwalkcg} 
		\State $(K_{\free}, K_{\SO}) \leftarrow (K', K')$, $x_K \leftarrow x_{K'}$, $\Sigma_K \leftarrow \Sigma_{K'}$ 
		\Else
		\State $(K_{\free}, x_K, \Sigma_K) \leftarrow \textsc{FreeCG}(K_{\free}, K_{\SO}, x_K, \Sigma_K)$ \label{line:freecg} \Comment{No $\SO$ call in this step}
		\EndIf
	\Else
	\State Find $z \in \mathbb{Z}^n$ such that $v = \Pi_{W_0}(z)$ \label{line:dim_reduce_start_implementation} 
	\Comment{Subspace $W_0 = -x_K + W$} 
	\State Hyperplane $P \leftarrow \{y: v^\top y = (v - z)^\top x_K + \lceil z^\top x_K \rfloor_\varphi \}$  
	\State $W \leftarrow W \cap P$, $K_{\SO} \leftarrow K_{\SO} \cap P$ \Comment{Dimension reduction} 
	\State $K_{\free} \leftarrow w + A^{-1/2} B_\infty$ \Comment{Ellipsoid $E(w, A) := P \cap (x_K + 2n \cdot E(\Sigma_K^{-1}))$} 
	\State $x_K \leftarrow \cg(K_{\free})$, $\Sigma_K \leftarrow \Cov(K_{\free})$
	\State Hyperplane $P_0 \leftarrow \{y : v^\top y = 0\}$, lattice $\Lambda \leftarrow \Pi_{P_0}(\Lambda)$  \Comment{Lattice projection} 
	\label{line:dim_reduce_end_implementation}
	\EndIf
\EndWhile
\State \textbf{Return} unique point $x^* \in K_{\SO}$
\EndProcedure
\end{algorithmic}
\end{algorithm}

\begin{algorithm}[htp!]\caption{}\label{alg:freeCG}
\begin{algorithmic}[1]
\Procedure{\textsc{FreeCG}}{$K_{\free},K_{\SO}, x_K, \Sigma_K$} 
\If{$x_K \notin K_{\SO}$} \Comment{Check the constraints for $K_{\SO}$} 
	\State Find constraint $a^\top x \leq b$ of $K_{\SO}$ violated by $x_K$
	\State $H \leftarrow \{x:a^\top x \leq a^\top x_K\}$ \Comment{$x_K$ lies on the boundary of $H$}
	\State $K'_{\free} \leftarrow K_{\free} \cap H$ \Comment{Volume of $K_{\free}$ shrinks}
	\State Obtain $\epsilon$-approx. centroid $x_{K'}$ and cov. $\Sigma_{K'}$ of $K'_{\free}$ as in \Cref{thm:randomwalkcg} 
\Else
	\State Find any constraint $H = \{x:a^\top x \leq b\}$ of $K_{\SO}$ that is not a constraint of $K_{\free}$ \Comment{$x_K \in H$}
	\State $K'_{\free} \leftarrow K_{\free} \cap H$ \Comment{$K_{\free}$ learns one more constraint of $K_{\SO}$}
	\State Obtain $\epsilon$-approx. centroid $x_{K'}$ and cov. $\Sigma_{K'}$ of $K'_{\free}$ as in \Cref{thm:sampling_polytope} \Comment{Validity by \Cref{lem:covariance_change}}
\EndIf
\State \textbf{Return} $K'_{\free}, x_{K'}, \Sigma_{K'}$
\EndProcedure
\end{algorithmic}
\end{algorithm}

\subsection{Proof of Main Result}

By the argument in the beginning of \Cref{subsec:proof_overview}, we can assume wlog that $f$ has a unique minimizer $x^* \in S_\varphi^n$.  
We first prove the correctness and oracle complexity of \Cref{alg:implementation}.
These proofs are very similar to the proofs of \Cref{lem:correctness_meta_algo} and \ref{lem:oracle_complexity_meta_algo} from the previous section, so we only highlight the differences.

\begin{lemma}[Correctness of \textsc{Main}] \label{lem:correctness_implementation}
Assuming the conditions in \Cref{thm:polyhedra} and that $f$ has a unique minimizer $x^* \in S_\varphi^n$, \Cref{alg:implementation} finds $x^*$. 
\end{lemma}

\begin{proof}
As in the proof of \Cref{lem:correctness_meta_algo}, we only need to verify that $x^* \in K_{\SO}$ is preserved under dimension reduction in Line \ref{line:dim_reduce_start_implementation}-\ref{line:dim_reduce_end_implementation}.
Let's assume that $x^* \in K_{\SO}$ before dimension reduction. 
Since \Cref{thm:randomwalkcg} guarantees $\|x_K - \cg(K_{\free})\|_{(\Sigma_K)^{-1}} \leq \epsilon$ and $(1-\epsilon) \cdot \Cov(K_{\free}) \preceq \Sigma_K \preceq (1+\epsilon) \cdot \Cov(K_{\free})$ with $\epsilon = 0.01$, it follows from \Cref{thm:isotropic_rounding} that \eqref{eq:sandwiching_conditon} still holds with $K$ replaced by $K_{\free}$: 
\begin{align*}
x_K + E(\Sigma_K^{-1})/2 \subseteq K_{\free} \subseteq x_K + 2n \cdot E(\Sigma_K^{-1}).
\end{align*}

Proceeding from here, the same argument as in the proof of \Cref{lem:correctness_meta_algo} shows that $K_{\free} \cap S_\varphi^n \subseteq P$. 
Also note that \Cref{alg:implementation} always maintains $K_{\SO} \subseteq K_{\free}$.
It follows that $K_{\SO} \cap S_{\varphi}^n \subseteq K_{\free} \cap S_\varphi^n \subseteq P$, i.e. all rational points in $K_{\SO} \cap S_{\varphi}^n$ are preserved during dimension reduction. This implies that $x^* \in K_{\SO} \cap P$ after dimension reduction and completes the proof of the lemma. 
\end{proof}

\begin{lemma}[Oracle complexity of \textsc{Main}] \label{lem:oracle_complexity_implementation}
Assuming the conditions in 
\Cref{thm:polyhedra} and that $f$ has a unique minimizer $x^* \in S_\varphi^n$, \Cref{alg:implementation} makes at most $O(n (\varphi + \log(\gamma n R)))$ calls to the separation oracle $\SO$ with high probability. 
\end{lemma}

\begin{proof}
Note that \Cref{alg:implementation} always maintains $K_{\SO} \subseteq K_{\free}$, and $\SO$ is only called in Line \ref{line:randomwalkcg} when $K_{\SO} = K_{\free}$. 
Since each run of \textsc{RandomWalkCG} in Line \ref{line:randomwalkcg} succeeds with probability $\delta = 1/\poly(n, \varphi, \log(\gamma R))$ for a large enough polynomial by \Cref{thm:randomwalkcg}, union bound implies that with high probability, the first $O(n (\varphi + \log(\gamma n R)))$ run of \textsc{RandomWalkCG} in Line \ref{line:randomwalkcg} all succeed.
We condition on this event. 
Then applying exactly the same analysis as in the proof of \Cref{lem:oracle_complexity_meta_algo} to the potential function
\begin{align*}
\Phi_{\SO} := \log(\vol(K_{\SO}) \cdot \det(\Lambda))
\end{align*}
gives the oracle complexity bound in the lemma. 
\end{proof}

Next, we show that \Cref{alg:implementation} makes at most $\poly(n, \varphi, \log(\gamma R))$ calls to $\textsc{FreeCG}$ with high probability. Since each call to $\textsc{FreeCG}$ can be implemented in $\poly(n, \varphi, \log(\gamma R))$ time by checking all the constraints of $K_{\SO}$, this will imply the bound on the number of arithmetic operations in \Cref{thm:polyhedra}.

\begin{lemma}[Number of \textsc{FreeCG} calls] \label{lem:freecg_steps}
Assuming the conditions in \Cref{thm:polyhedra} and that $f$ has a unique minimizer $x^* \in S_\varphi^n$, \Cref{alg:implementation} makes at most $\poly(n, \varphi, \log(\gamma R))$ calls to $\textsc{FreeCG}$ with high probability. 
\end{lemma}

\begin{proof}
As in the proof above, we condition on the high probability event that the first $\poly(n, \varphi, \log(\gamma R))$ calls to \textsc{RandomWalkCG} as well as the sampling algorithm in \Cref{thm:sampling_polytope} all succeed.
In the beginning of the algorithm, $K_{\SO} = B_{\infty}(R)$ and thus can be specified using $2n$ constaints. 
An additional constraint is placed on $K_{\SO}$ each time $\SO$ is called, and since the number of $\SO$ calls is at most $O(n(\varphi + \log(\gamma n R))$, the number of constraints \Cref{alg:implementation} maintains for the specification of $K_{\SO}$ can be at most $O(n(\varphi + \log(\gamma n R))$ throughout. 

Now we upper bound the number of calls to $\textsc{FreeCG}$. In fact, we show that the total number of cutting plane steps for $K_{\free}$ in Line \ref{line:randomwalkcg} and \ref{line:freecg} of \Cref{alg:implementation} is at most $\poly(n,\varphi, \log(\gamma R))$. Our strategy is to consider the potential function
\begin{align*}
\Phi_{\free} := \log(\vol(K_{\free}) \cdot \det(\Lambda)) ,
\end{align*}
and repeat the analysis as in the proof of \Cref{lem:oracle_complexity_meta_algo}. However, there are two main differences that we highlight below. 

The first main difference is that when we reduce the dimension in Line \ref{line:dim_reduce_start_implementation}-\ref{line:dim_reduce_end_implementation} of \Cref{alg:implementation}, we are not simply slicing $K_{\free}$ by the hyperplane $P$. Instead, we first replace $K_{\free}$ by its outer containing ellipsoid $x_K + 2n \cdot E(\Sigma_K^{-1})$, then further replace the sliced ellipsoid $E(w, A) = P \cap (x_K + 2n \cdot E(\Sigma_K^{-1}))$ by its outer containing hyperrectangle $K^{\new}_{\free} := w + A^{-1/2} B_\infty$. 
Since we have the sandwiching condition that
\begin{align*}
x_K + E(\Sigma_K^{-1})/2 \subseteq K_{\free} \subseteq x_K + 2n \cdot E(\Sigma_K^{-1}),
\end{align*}
replacing $K_{\free}$ by $x_K + 2n \cdot E(\Sigma_K^{-1})$ increases its volume by at most $n^{O(n)}$. Also note that replacing an ellipsoid by its outer containing hyperrectangle increases its volume by at most $n^{O(n)}$. 
It then follows that these replacements contribute to at most a factor of $n^{O(n)}$ to $\vol(K_{\free})$ for each dimension reduction step. As there are at most $n$ dimension reduction steps, the increase in $\Phi_{\free}$ due to these replacements is at most $O(n^2 \log(n))$ additively. 

The second main difference is that not every call to \textsc{FreeCG} decreases $\vol(K_{\free})$ by a constant factor. In particular, this is the case if $x_K \in K_{\SO}$ in \Cref{alg:freeCG} and we add to $K_{\free}$ one constraint of $K_{\SO}$ that is currently not a constraint of $K_{\free}$.
However, since we have shown above that $K_{\SO}$ has at most $O(n (\varphi + \log(\gamma n R)))$ constraints, this case can happen at most $O(n (\varphi + \log(\gamma n R)))$ in each dimension until all the constraints for $K_{\SO}$ appear in the list of constraints for $K_{\free}$, in which case our algorithm can efficiently certify that $K_{\free} = K_{\SO}$. 
Whenever this happens, no additional call to \textsc{FreeCG} will happen until the dimension is further reduced. 

Incorporating the above two differences into the analysis as in the proof of  \Cref{lem:oracle_complexity_meta_algo}, we obtain that the total number of cutting plane steps in Line \ref{line:randomwalkcg} and \ref{line:freecg} applied to $K_{\free}$ is at most $O(n^2(\varphi + \log (\gamma n R)))$. This is also an upper bound on the number of calls to \textsc{FreeCG}, and thus proves the lemma. 
\end{proof}

\begin{proof}[Proof of \Cref{thm:polyhedra}]
By the argument in the beginning of \Cref{subsec:proof_overview}, we may assume without loss of generality that $f$ has a unique minimizer $x^* \in S_\varphi^n$. 
The correctness of \Cref{alg:implementation} is given in \Cref{lem:correctness_implementation}, and its oracle complexity is upper bounded in \Cref{lem:oracle_complexity_implementation}. 
We are thus left to upper bound the total number of arithmetic operations used by \Cref{alg:implementation}.

By \Cref{lem:freecg_steps}, \Cref{alg:implementation} makes at most $\poly(n, \varphi, \log(\gamma R))$ calls to \textsc{FreeCG} and each such step can be implemented using $\poly(n, \varphi, \log(\gamma R))$ arithmetic operations. 
Since $\textsc{ApproxSVP}$ is called after each cutting plane step in Line \ref{line:randomwalkcg} and \ref{line:freecg}, the total number of calls to $\textsc{ApproxSVP}$ is at most $\poly(n, \varphi, \log(\gamma R))$. 
Note that the remaining part of the algorithm takes $\poly(n, \varphi, \log(\gamma R))$ arithmetic operations. 
This gives the upper bound on the number of arithmetic operations
and finishes the proof of the theorem. 
\end{proof}


\section{Submodular Function Minimization}

\label{sec:submodular_appendix}

In this section, we do not seek to give a comprehensive introduction to submodular functions, but only provide the necessary definitions and properties that are needed for the proof of Theorem~\ref{thm:submodular_main}. 
We refer interested readers to the famous textbook by Schrijver~\cite{s03} or the extensive survey by McCormick~\cite{m05} for more details on submodular functions.

\subsection{Preliminaries}

Throughout this section, we use $[n] = \{1,\cdots, n\}$ to denote the ground set and let $f: 2^{[n]} \rightarrow \mathbb{Z}$ be a set function defined on subsets of $[n]$. 
For a subset $S \subseteq [n]$ and an element $i \in [n]$, we define $S + i := S \cup \{i\}$.  
A set function $f$ is {\em submodular} if it satisfies the following property of {\em diminishing marginal differences}: 

\begin{definition}[Submodularity] A function $f: 2^{[n]} \rightarrow \mathbb{Z}$ is submodular if $f(T + i) - f(T) \leq f(S + i) - f(S)$, for any subsets $S \subseteq T \subseteq [n]$ and $i \in [n] \setminus T$. 
\end{definition}

Throughout this section, the set function $f$ we work with is assumed to be submodular even when it is not stated explicitly. 
We may assume without loss of generality that $f(\emptyset) = 0$ by replacing $f(S)$ by $f(S) - f(\emptyset)$. 
We assume that $f$ is accessed by an {\em evaluation oracle}, and use $\EO$ to denote the time to compute $f(S)$ for a subset $S$. 
Our algorithm for SFM is based on a standard convex relaxation of a submodular function, known as the Lov\'asz extension~\cite{gls88}.  

\begin{definition}[Lov\'asz extension]
The Lov\'asz extension $\hat{f}:[0,1]^n \rightarrow \mathbb{R}$ of a submodular function $f$ is defined as 
\begin{align*}
\hat{f}(x) = \E_{t \sim [0,1]} [f(\{i: x_i \geq t\})],
\end{align*}
where $t \sim [0,1]$ is drawn uniformly at random from $[0,1]$. 
\end{definition}

The Lov\'asz extension $\hat{f}$ of a submodular function $f$ has many desirable properties. 
In particular, $\hat{f}$ is a convex relaxation of $f$ and it can be evaluated efficiently. 

\begin{theorem}[Properties of Lov\'asz extension]
\label{thm:lovasz_extension_properties}
Let $f: 2^{[n]} \rightarrow \mathbb{Z}$ be a submodular function and $\hat{f}$ be its Lov\'asz extension. Then, 
\begin{itemize}
	\item [(a)] $\hat{f}$ is convex and $\min_{x \in [0,1]^n} \hat{f}(x) = \min_{S \subseteq [n]} f(S)$;
	\item [(b)] $f(S) = \hat{f}(I_S)$ for any subset $S \subseteq [n]$, where $I_S$ is the indicator vector for $S$; 
	\item [(c)] Suppose $x \in [0,1]^n$ satisfies $x_1 \geq \cdots \geq x_n$, then $\hat{f}(x) = \sum_{i=1}^n (f([i]) - f([i-1])) x_i$;
	\item [(d)] The set of minimizers of $\hat{f}$ is the convex hull of the set of minimizers of $f$. 
\end{itemize}
\end{theorem}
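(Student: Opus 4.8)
The whole argument rests on one explicit formula for $\hat f$, which I would establish first. Fix $x \in [0,1]^n$ and a permutation $\sigma$ with $x_{\sigma(1)} \geq \cdots \geq x_{\sigma(n)}$; set $x_{\sigma(n+1)} := 0$ and $T_k := \{\sigma(1), \dots, \sigma(k)\}$, so that $\emptyset = T_0 \subseteq T_1 \subseteq \cdots \subseteq T_n$ is a chain. Since $\{i : x_i \geq t\}$ equals $T_k$ precisely for $t \in (x_{\sigma(k+1)}, x_{\sigma(k)}]$ (and is empty for $t > x_{\sigma(1)}$), integrating the definition of $\hat f$ over $t \in [0,1]$ and using $f(\emptyset) = 0$ yields
\begin{equation} \label{eq:lovasz_formula}
\hat f(x) \;=\; \sum_{k=1}^{n} \big(x_{\sigma(k)} - x_{\sigma(k+1)}\big)\, f(T_k) \;=\; \sum_{k=0}^{n} \mu_k\, f(T_k),
\end{equation}
with $\mu_0 := 1 - x_{\sigma(1)}$ and $\mu_k := x_{\sigma(k)} - x_{\sigma(k+1)}$ for $k \geq 1$. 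The $\mu_k$ are nonnegative, sum to $1$, and satisfy $x = \sum_{k=0}^{n} \mu_k\, I_{T_k}$ (verify coordinate by coordinate), so \eqref{eq:lovasz_formula} exhibits $\hat f(x)$ as a genuine convex combination of $f(T_0), \dots, f(T_n)$. Parts (b) and (c) then drop out of \eqref{eq:lovasz_formula}: for $x = I_S$ every level set is $S$, giving $\hat f(I_S) = f(S)$; and for $\sigma = \mathrm{id}$ the first sum telescopes (using $f([0]) = 0$) into $\sum_{i=1}^n \big(f([i]) - f([i-1])\big)\, x_i$. When $x$ has ties the choice of $\sigma$ is not unique, but the value in \eqref{eq:lovasz_formula} is independent of it.

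For part (a) I would obtain convexity from the Edmonds greedy/LP characterization: for $x \geq 0$,
\begin{equation} \label{eq:edmonds}
\hat f(x) \;=\; \max\big\{\, \langle x, y \rangle \;:\; y \in P(f) \,\big\}, \qquad P(f) := \{\, y \in \mathbb{R}^n : y(S) \leq f(S)\ \ \forall\, S \subseteq [n] \,\},
\end{equation}
the maximum being attained at the greedy vector $y^{(\sigma)}$ with $y^{(\sigma)}_{\sigma(k)} := f(T_k) - f(T_{k-1})$: submodularity gives $y^{(\sigma)} \in P(f)$, an Abel summation gives $\langle x, y^{(\sigma)}\rangle = \hat f(x)$, and LP-optimality is the usual greedy exchange argument. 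Being a pointwise maximum of linear functions, $\hat f$ is then convex on $[0,1]^n$. The identity $\min_{x\in[0,1]^n}\hat f(x) = \min_{S\subseteq[n]} f(S)$ follows by combining two bounds: ``$\leq$'' is part (b), and ``$\geq$'' holds because \eqref{eq:lovasz_formula} writes every $\hat f(x)$ as a convex combination of values $f(T_k)$, each at least $\min_S f(S)$.

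Part (d) I would then deduce from (a), (b) and \eqref{eq:lovasz_formula}. Let $m := \min_S f(S) = \min_{[0,1]^n}\hat f$. If $f(S) = m$ then $\hat f(I_S) = m$ by (b), so $I_S \in \arg\min \hat f$; since the minimizer set of a convex function is convex, it contains $\mathrm{conv}\{I_S : f(S) = m\}$. Conversely, if $\hat f(x) = m$, then in \eqref{eq:lovasz_formula} the convex combination $\sum_k \mu_k f(T_k) = m$ of numbers $f(T_k) \geq m$ forces $f(T_k) = m$ whenever $\mu_k > 0$; since $x = \sum_k \mu_k I_{T_k}$, this presents $x$ as a convex combination of indicator vectors of minimizers of $f$. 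The two inclusions give (d).

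\textbf{Main obstacle.} Everything except convexity is routine bookkeeping with the level-set formula \eqref{eq:lovasz_formula}; the only real content, and the sole place submodularity is actually used, is the representation \eqref{eq:edmonds} — equivalently, that the greedy vector lies in $P(f)$ and is LP-optimal. I would cite this as the classical theorem of Edmonds; a self-contained argument would verify $y^{(\sigma)}(S) \leq f(S)$ for all $S$ by induction on $|S|$ using the diminishing-marginal-returns inequality, and that is the crux of the proof.
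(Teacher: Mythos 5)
Your proposal is correct. Note, though, that the paper never proves Theorem~\ref{thm:lovasz_extension_properties}: it is stated as standard background (with $f(\emptyset)=0$ assumed just beforehand, which your derivation of (c) and of the level-set formula correctly relies on), and the reader is pointed to the standard references on submodularity. Your argument is exactly the classical one — the level-set/chain formula $\hat f(x)=\sum_k \mu_k f(T_k)$ with $x=\sum_k \mu_k I_{T_k}$ for (b), (c), the minimum identity and both inclusions in (d), plus Edmonds' greedy/LP characterization $\hat f(x)=\max_{y\in P(f)}\langle x,y\rangle$ for convexity, which is indeed the only place submodularity enters — so there is no divergence to report; citing Edmonds for greedy optimality (or the sketched induction for $y^{(\sigma)}\in P(f)$) is entirely adequate here.
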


Next we address the question of implementing the separation oracle (as in Definition~\ref{def:separation_oracle}) using the evaluation oracle of $f$.

\begin{theorem}[Separation oracle implementation for Lov\'asz extension, Theorem 61 of~\cite{lsw15}] \label{thm:SO_from_EO}
Let $f: 2^{[n]} \rightarrow \mathbb{Z}$ be a submodular function and $\hat{f}$ be its Lov\'asz extension, then a separation oracle for $\hat{f}$ can be implemented in time $O(n \cdot \EO + n^2)$. 
\end{theorem}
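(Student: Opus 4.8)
The plan is to implement the separation oracle of Definition~\ref{def:separation_oracle} for the Lov\'asz extension $\hat{f}$ by running Edmonds' greedy algorithm to produce a subgradient. Given a query point $x\in\mathbb{R}^n$, I would first dispose of the trivial case: if $x\notin[0,1]^n$, pick a coordinate $i$ with $x_i\notin[0,1]$ and return the axis-aligned direction $c=e_i$ (if $x_i<0$) or $c=-e_i$ (if $x_i>1$); this is valid because $K^*\subseteq[0,1]^n$ by Theorem~\ref{thm:lovasz_extension_properties}(a), and it costs $O(n)$ time and no calls to $\EO$. So assume $x\in[0,1]^n$.

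For $x\in[0,1]^n$, sort the coordinates so that after relabelling $x_1\ge x_2\ge\cdots\ge x_n$, form the chain $\emptyset=S_0\subset S_1\subset\cdots\subset S_n=[n]$ with $S_k=\{1,\dots,k\}$, query the evaluation oracle to obtain $f(S_1),\dots,f(S_n)$ (using $f(S_0)=f(\emptyset)=0$, this is exactly $n$ calls to $\EO$), set $g_k:=f(S_k)-f(S_{k-1})$, and permute $g$ back to the original labelling. Sorting costs $O(n\log n)$ and the rest is $O(n)$ arithmetic, for a total of $n\cdot\EO+O(n^2)$ as claimed. By Theorem~\ref{thm:lovasz_extension_properties}(c) we get $\langle g,x\rangle=\hat{f}(x)$, and by the classical fact (Edmonds~\cite{e70}) that $\hat f$ is the pointwise maximum over all coordinate orderings $\sigma$ of the linear functions $y\mapsto\langle g^{(\sigma)},y\rangle$ — equivalently $\hat{f}(y)=\max_{h\in B_f}\langle h,y\rangle$ for the base polytope $B_f=\{h:\langle h,I_S\rangle\le f(S)\ \forall S,\ \langle h,\mathbf{1}\rangle=f([n])\}$ — we also get $\hat{f}(y)\ge\langle g,y\rangle$ for every $y\in[0,1]^n$. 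Hence $g$ is a subgradient of $\hat{f}$ at $x$: $\hat{f}(y)\ge\hat{f}(x)+\langle g,y-x\rangle$ for all $y\in[0,1]^n$.

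The oracle's output is then chosen as follows. Since the sorted-coordinate gaps of $x$ express $\hat{f}(x)$ as a convex combination of $f(S_0),\dots,f(S_n)$, we always have $\hat{f}(x)\ge\min_{0\le k\le n}f(S_k)$. If equality holds, the oracle outputs ``YES'': here one argues, using $\min_{[0,1]^n}\hat{f}=\min_S f(S)$ from Theorem~\ref{thm:lovasz_extension_properties}(a) together with submodularity, that the chain already attains the global optimum, so $x\in K^*$. Otherwise $\hat{f}(x)>\min_{[0,1]^n}\hat{f}$, so $x\notin K^*$, and the oracle outputs $c:=-g$; then for any $y\in K^*$ we have $\hat{f}(y)=\min_{[0,1]^n}\hat{f}\le\hat{f}(x)$, hence $\langle g,y-x\rangle\le\hat{f}(y)-\hat{f}(x)\le 0$, so $\langle c,y\rangle\ge\langle c,x\rangle$ and therefore $\min_{y\in K^*}\langle c,y\rangle\ge\langle c,x\rangle$, as required by Definition~\ref{def:separation_oracle}(b). (Note that $c=-g$ is in fact a valid response to clause (b) even when $x\in K^*$, so for the cutting-plane algorithm of Section~\ref{sec:main_algo}, which never uses the ``YES'' answer, one may skip the optimality test altogether.)

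The step that needs the most care is the optimality test: showing that $\hat{f}(x)=\min_k f(S_k)$ is equivalent to $x$ minimizing $\hat{f}$ over $[0,1]^n$, in particular for query points with repeated coordinates, where the subgradient set is a higher-dimensional face of $B_f$ and the greedy vertex $g$ depends on how ties are broken. This is precisely where submodularity enters (beyond mere convexity of $\hat f$), via Edmonds' theorem and the identity $\min_{[0,1]^n}\hat{f}=\min_S f(S)$; I would invoke Theorem~\ref{thm:lovasz_extension_properties} and \cite[Theorem 61]{lsw15} for this verification. Everything else — the box-constraint case, the cost accounting, and passing from the subgradient to a valid separating hyperplane — is routine.
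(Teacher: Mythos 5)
The paper offers no proof of this statement at all---it is imported verbatim from \cite[Theorem 61]{lsw15}---and the core of your construction (Edmonds' greedy vector $g$ obtained from the sorted chain as a subgradient of $\hat f$, exactly $n$ calls to $\EO$ plus $O(n\log n)$ sorting, the handling of $x\notin[0,1]^n$, and the halfspace $c=-g$ for clause (b) of Definition~\ref{def:separation_oracle}) is precisely the standard argument behind that citation; those parts, including the cost accounting $O(n\cdot \EO+n^2)$, are correct.

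The genuine gap is your optimality test for clause (a). The claim that $\hat f(x)=\min_{0\le k\le n}f(S_k)$ forces $x\in K^*$ is false: the minimum over the one chain generated by a particular tie-breaking of the sorted coordinates of $x$ need not equal the global minimum of $f$. Concretely, take $n=2$, $f(\emptyset)=0$, $f(\{1\})=1$, $f(\{2\})=-1$, $f(\{1,2\})=0$ (modular, hence submodular), and $x=(1/2,1/2)$ with the tie broken so that the chain is $\emptyset\subset\{1\}\subset\{1,2\}$ with values $0,1,0$. Then $\hat f(x)=0=\min_k f(S_k)$, yet $\min\hat f=\min_S f(S)=-1$, so your oracle would answer ``YES'' at a point that is not a minimizer. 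The sketched justification (``submodularity implies the chain already attains the global optimum'') therefore does not hold, and invoking \cite[Theorem 61]{lsw15} to patch this step is circular, since that is exactly the statement being proved. Indeed, clause (a) read literally asks to decide whether $\hat f(x)=\min_S f(S)$, which cannot be determined from the $n$ evaluations along a single chain. The correct resolution is the one you relegate to a parenthetical: dispense with the exact test and always return $c=-g$, which satisfies the inequality of clause (b) even when $x\in K^*$; this is all that Algorithm~\ref{alg:main_algo} (and the notion of separation oracle in \cite{lsw15}) actually uses. As written, however, the main route you take to clause (a) is incorrect.
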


\subsection{Proof of Theorem~\ref{thm:submodular_main}}

Before presenting the proof, we restate Theorem~\ref{thm:submodular_main} for convenience. 

\Submodular*

\begin{proof}
We apply Corollary~\ref{cor:instantiation} to the Lov\'asz extension $\hat{f}$ of the submodular function $f$ with $R = 1$. 
By part (a) and (d) of Theorem~\ref{thm:lovasz_extension_properties}, $\hat{f}$ is a convex function that satisfies the assumption $(\star)$ in Corollary~\ref{cor:instantiation}
Thus Corollary~\ref{cor:instantiation} gives a strongly polynomial algorithm for finding an integral minimizer of $\hat{f}$ that makes $O(n^2 \log\log (n)/\log (n))$ calls to a separation oracle of $\hat{f}$, and an exponential time algorithm that finds an integral minimizer of $\hat{f}$ using $O(n \log(n))$ separation oracle calls. 
This integral minimizer also gives a minimizer of $f$. 
Since a separation oracle for $\hat{f}$ can be implemented using $O(n)$ calls to $\EO$ by Theorem~\ref{thm:SO_from_EO}, the total number of calls to the evaluation oracle is thus $O(n^3 \log \log (n)/\log (n))$ for the strongly polynomial algorithm, and is $O(n^2 \log(n))$ for the exponential time algorithm. 
This proves the theorem. 
\end{proof}



\section*{Acknowledgments}
I would like to thank the anonymous referees of Journal of the ACM for very insightful comments. 
I thank my advisor Yin Tat Lee for advising this project.
Part of this work is inspired from earlier notes by Yin Tat Lee and Zhao Song.   
A special thanks to Daniel Dadush for pointing out the implication of the Gr{\"o}tschel-Lov{\'a}sz-Schrijver approach to our problem, suggesting the high dimensional slicing lemma which greatly simplifies my earlier proofs, and to Daniel Dadush and Thomas Rothvoss for pointing out that our framework implies $O(n^2 \log(n))$ oracle complexity for SFM by solving SVP exactly. 
I also thank Thomas Rothvoss for other useful comments and his wonderful lecture notes on integer optimization and lattice theory.   
I also thank Jonathan Kelner, Janardhan Kulkarni, Aaron Sidford, Zhao Song, Santosh Vempala, and Sam Chiu-wai Wong for helpful discussions on this project. 

\bibliographystyle{alpha}
\bibliography{bib.bib}

\end{document}